\documentclass[12pt,a4paper]{article}

\usepackage{macros}

\usepackage[normalem]{ulem}

\title{One Code Fits All: Strong stuck-at codes for versatile memory encoding}
\author{
    Roni Con\thanks{Technion - Israel Institute of Technology,
                    Haifa, Israel,
                    roni.con93@gmail.com}
    \and
    Ryan Gabrys\thanks{University of California San Diego,
                    Naval Information Warfare Center, CA,
                    rgabrys@ucsd.edu}
    \and
    Eitan Yaakobi\thanks{Technion - Israel Institute of Technology,
                    Haifa, Israel,
                    yaakobi@cs.technion.ac.il}
}
\begin{document}

\maketitle
\begin{abstract}
In this work we consider a generalization of the well-studied problem of coding for ``stuck-at'' errors, which we refer to as ``strong stuck-at'' codes. In the traditional framework of stuck-at codes, the task involves encoding a message into a one-dimensional binary vector. However, a certain number of the bits in this vector are 'frozen', meaning they are fixed at a predetermined value and cannot be altered by the encoder. The decoder, aware of the proportion of frozen bits but not their specific positions, is responsible for deciphering the intended message.
    We consider a more challenging version of this problem where the decoder does not know also the fraction of frozen bits. 
    We construct explicit and efficient encoding and decoding algorithms that get arbitrarily close to capacity in this scenario. 
    Furthermore, to the best of our knowledge, our construction is the first, fully explicit construction of stuck-at codes that approach capacity.

\end{abstract}
\newpage
%\tableofcontents
%\newpage

\section{Introduction}
In this research, we initiate the development of \emph{strong-stuck-at codes}, an advanced version of traditional codes that have applications to \emph{stuck-at memories}. Our approach considers a storage medium analogous to a one-dimensional vector with a fixed length, containing a certain proportion of `frozen' components that cannot be altered during encoding. The objective is to create a coding system capable of encoding the greatest possible amount of information while ensuring the frozen components' values and positions, known during encoding but unknown during decoding, remain intact. Previous studies typically assume knowledge of the maximum size of the set of frozen components at the time of encoding and decoding, even if the set itself is not known. Our study addresses the more flexible (yet challenging) scenario where both the specific set and the maximum size of the frozen components are unknown at the decoding stage.

The problem of constructing codes for stuck-at memories has its roots in the early work of Kuznetsov and Tsybakov \cite{kuznetsov1974coding}. Building on this, Tsybakov expanded the scope by considering scenarios where, apart from the frozen components, the memory might incur additional errors post-encoding \cite{tsybakov1975defect}. This led Heegard to innovate a new class of codes, termed partitioned linear block codes \cite{heegard1983partitioned}, which he demonstrated to meet the Shannon capacity in specific conditions \cite{heegard1985capacity}. However, these findings are not applicable to scenarios involving binary alphabet codes, which is the primary focus of our study. It's noteworthy that this issue has evolved with the advent of newer technologies like Flash and Phase-Change Memory (PCM) and some new works on this (and similar) settings include \cite{lastras2010algorithms,kim2013coding,wachter2015codes,mahdavifar2015explicit}.

A strongly related area of work is in the setting of coding for ``Write-Once-Memories'' or WOM, which was originally introduced by Rivest and Shamir in 1982 \cite{rivest1982reuse}. In this setting, memory cells are initialized to each have value $0$ and, at each round of the encoding, one is allowed to change some fraction of the cells only from $0$ to $1$. For the case of two-write WOM-codes, in the first round the encoder is permitted to change any fraction of the cells to $1$. The decoding in the first round is straightforward. In the second round, the encoder has access to the state of the memory after the first round so that it knows which cells were set to one in the first round, but the decoder only has access to the state of the memory after the second write and so it does not know what bits were set to 1 in the first round. Thus, the second round of encoding/decoding represents an instance of the defective memory with stuck-at components.

Capacity-achieving two-write WOM-codes have been known for some time starting with the seminal work by Sphilka \cite{shpilka2013new} and later by Chee et al. \cite{chee2019explicit}. 
In fact, using this connection between two-write WOM-codes and coding for stuck at errors, it was noted in \cite{shpilka2013new} that if the encoder is allowed to transmit a small amount of side information directly to the decoder that cannot be corrupted by stuck-at errors, then a slight variation of the encoder/decoder for his two-write WOM is equivalent to a stuck-at code. In the work by Chee et al. \cite{chee2019explicit}, which leverages spreads in projective geometry in order to guide the encoding function for the second round write, the value of the cells matters so that it is not clear how to make their approach account for frozen or stuck-at cells that can have value $0$ or $1$. 

Perhaps the closest existing work to the problem of designing strong-stuck-at codes is the work Gabizon and Shaltiel \cite{gabizon2012invertible}, who designed capacity-achieving stuck-at codes for the case where the maximum number of frozen components is known ahead of time. Although their constructions provided the first explicit scheme with asymptotically optimal rate, their model permitted a randomized encoding function which was allowed to succeed with randomized polynomial time (with respect to the block length of the memory).

In this work, we develop almost capacity-achieving strong-stuck-at codes where the number of frozen components is not known beforehand. Although our primary goal is the design of explicit and efficient codes for this generalized model, our codes also have several properties for the classical stuck-at model. Unlike the work of Gabizon and Shaltiel, our encoding procedure is completely deterministic (see \Cref{thm:deterministic-const}). Furthermore, we show that in the randomized version of our algorithm which is presented in \Cref{sec:const-w-assumption}, we are able to construct codes using fewer random bits than in previous constructions. %This make be useful if attempts to use such codes were made in practice. \ey{please fix the last sentence.}

The rest of this paper is organized as follows. In the remainder of this section, we formally introduce our problem setup and highlight our results. In Section 2, we present an existential result showing that, perhaps surprisingly, it is possible to encode at virtually the same rate as a conventional stuck-at code even when the size (or a bound on the size) of the set of frozen components is not available to the decoder. Section 3 presents a simplified version of our construction where we assume the encoder is provided a side channel to convey a small amount of information to the decoder in a manner analogous to the setting originally studied in \cite{shpilka2013new}. Finally Section 4 presents our main construction.

\subsection{Problem setup}
Denote $\subsetSize{i}{N}:= \{ \cF \subseteq [N] \mid |\cF| = i
 \}$, i.e., all the subsets of $[N]$ of size $i$.
Formally, our goal is to design
\begin{enumerate}
    \item A sequence of pairs $E := (E_1, \cM_1), (E_2, \cM_2), \ldots, (E_N, \cM_N)$ where the $\cM_i$'s are sets of messages and
    \[
    E_i:\zo^{N} \times \subsetSize{i}{N} \times \cM_i \rightarrow \zo^{N}
    \]
    are encoding maps that get as input a cover vector $\bfv \in \zo^N$, a set of frozen indices of size $i$, and a message to encode.
    \item A decoder 
    \[
    D: \zo^{N} \rightarrow \bigcup_{i=1}^N \cM_i
    \]
    that maps vectors to messages.
\end{enumerate}

%\rg{A couple of things about this (some of which we talked about): (i) I don't think we need to specify $E,D$ as parameters in the definition. (ii) I like the idea of using the gap between codebooks which for our case is $\epsilon$. In other words, I'd call these things $\epsilon$-gapped strong stuck-at-codes if for any set of $\rho N$ frozen components, we can store at least $N-\rho N - \epsilon$ bits of information (or say that we can encode with rate $(1-\rho - \epsilon$. (iii) One of our main results should be to give an explicit construction that gives a family of $\epsilon$-gapped strong-stuck-at-codes for any $\epsilon > 0$. }

A \emph{strong-stuck-at-code} of \emph{length} $N$ is a pair $(E, D)$ such that for every $i\in [N]$, $\bfv\in \zo^N$, $\cF\in \bP^{i}([N])$, and $\bfm \in \cM_i$, the following two conditions hold:
\begin{enumerate}
    \item Consistency:
    \[
    (E_i(\bfv, \cF, \bfm))_j = \bfv_j\;, \quad \forall j\in \cF \;.
    \]
    Namely, the encoders are allowed to change only coordinates of $\bfv$ whose indices are outside of $\cF$.
    \item Unique-decodability:
    \[
    D(E_i(\bfv,\cF, \bfm)) = \bfm \;.
    \]
\end{enumerate}

\begin{defi}
    The rate of a strong-stuck-at-code at $\rho$-fraction defect is defined as 
    \[
    \frac{\log\left(\left|\cM_{\rho N}\right|\right)}{N} \;.
    \]
\end{defi}
    
    Naturally, given that $\rho N$ of the bits are frozen, we can encode up to $1-\rho$ fraction of information bits. Our goal in this paper is to design codes that approach this bound. This goal motivates the following code definition.

\begin{defi}
    Let $\varepsilon > 0$. An $\varepsilon$-gapped strong-stuck-at-code of length $N$ is a strong stuck-at code such that for every defect fraction $\rho \in (0, 1- \varepsilon)$, the rate of the code is at least $1 - \rho - \varepsilon$. 
\end{defi}

\subsection{Our results}

%\ronicomment{Compute the lengths in terms of $\varepsilon$}
In the following theorem, we show that there are $\varepsilon$-gapped strong-stuck-at-code.
\begin{restatable}{thm}{existencialConst} \label{thm:existencial-const}
    For every $\varepsilon > 0$, there exists an $N(\varepsilon)$ such that for every $N > N(\varepsilon)$, there exists an $\varepsilon$-gapped strong-stuck-at-code of length $N$. 
    %such that its rate at $\rho$-fraction defect is at least $1 - \rho - \varepsilon$ for every $\rho \in (0,1)$.
\end{restatable}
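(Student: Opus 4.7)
I would prove existence by a random coding argument in which the codeword space $\zo^N$ is partitioned into pairwise-disjoint ``bins'' $B_\bfm := D^{-1}(\bfm)$, one per message $\bfm$ across \emph{all} defect levels. Since the decoder is simply the map $\bfy \mapsto \bfm$ whenever $\bfy \in B_\bfm$, the requirement on the encoder reduces to: for every $i$, every $\bfm \in \cM_i$, every $\cF \in \subsetSize{i}{N}$, and every $u \in \zo^{\cF}$, the bin $B_\bfm$ must contain some $\bfy$ with $\bfy|_\cF = u$. If this ``projection-covering'' property holds, then $E_i(\bfv,\cF,\bfm)$ is defined as any $\bfy \in B_\bfm$ with $\bfy|_\cF = \bfv|_\cF$, and both consistency and unique decodability are immediate.

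Next I would choose parameters carefully. For $0 \le i \le (1-\varepsilon)N$ set $|\cM_i| := 2^{\lceil (1 - i/N - \varepsilon)N \rceil}$, which guarantees rate at least $1 - \rho - \varepsilon$ at defect fraction $\rho = i/N$. For messages at level $i$ I target a bin size $T_i := 2^{\lceil i + \varepsilon N/2 \rceil}$. A direct calculation gives
\[
\sum_{i} |\cM_i|\,T_i \;\le\; (N+1)\cdot 2^{(1 - \varepsilon/2)N} \;\le\; 2^N
\]
for all $N$ large enough in terms of $\varepsilon$, so bins of these total sizes can in principle be packed disjointly inside $\zo^N$.

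I would realize the bins by random labeling: independently for each $\bfy \in \zo^N$, set $L(\bfy) = \bfm$ with probability $T_i/2^N$ for each $\bfm \in \cM_i$ and each $i$, and $L(\bfy) = \star$ with the remaining probability, which is nonnegative by the volume bound. Define $B_\bfm := L^{-1}(\bfm)$; disjointness is automatic. For a fixed triple $(\bfm \in \cM_i, \cF, u)$, the probability that none of the $2^{N-i}$ vectors $\bfy$ with $\bfy|_\cF = u$ gets the label $\bfm$ is at most
\[
\Bigl(1 - \tfrac{T_i}{2^N}\Bigr)^{2^{N-i}} \;\le\; \exp\!\bigl(-T_i \cdot 2^{-i}\bigr) \;\le\; \exp\!\bigl(-2^{\varepsilon N/2}\bigr).
\]
Union-bounding over $i \le N$, $\bfm \in \cM_i$, $\cF \in \subsetSize{i}{N}$, and $u \in \zo^{\cF}$ costs a factor of only $2^{O(N)}$, so the total failure probability is $2^{O(N)}\exp(-2^{\varepsilon N/2}) = o(1)$. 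Hence a good labeling exists for every sufficiently large $N$, and fixing one yields the claimed $\varepsilon$-gapped strong-stuck-at-code.

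\emph{Main obstacle.} The delicate point is reconciling three constraints simultaneously: (i) each $|\cM_i|$ must be large enough to give rate $1 - \rho - \varepsilon$, (ii) each target bin size $T_i$ must exceed $2^i$ by a sufficient factor for a random bin to hit every $\cF$-projection with failure probability beating the union bound, and (iii) the total volume $\sum_i |\cM_i|\,T_i$ must stay below $2^N$ so the bins can be disjoint. The $\varepsilon$ slack in the rate translates into a factor $2^{-\varepsilon N/2}$ of headroom, and it is precisely this slack that lets (iii) coexist with (i) and (ii); it also explains why the theorem restricts to $\rho < 1 - \varepsilon$.
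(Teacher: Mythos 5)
Your proposal is correct and follows essentially the same route as the paper: a random assignment of the vectors of $\zo^N$ to message-indexed bins, the bound $(1-p)^{2^{N-|\cF|}}\le \exp(-2^{\Omega(\varepsilon N)})$ for a fixed $(\bfm,\cF,u)$, and a $2^{O(N)}$ union bound. The only (immaterial) differences are that the paper quantizes the message lengths into $L=O(1/\varepsilon)$ levels and chooses bin probabilities summing exactly to one, whereas you keep one message space per defect count and verify a packing inequality with leftover vectors labeled $\star$.
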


Our next theorem presents a randomized construction of $\varepsilon$-gapped strong-stuck-at-code.
\begin{restatable}{thm}{randomizedConst} \label{thm:randomized-const}
    For every $\varepsilon > 0$, there exists an $N(\varepsilon)$ such that for every $N > N(\varepsilon)$, there exists a randomized $\varepsilon$-gapped strong-stuck-at-code of length $N$ such that
    \begin{enumerate}
        \item The encoder and the decoder run in $\cO{N\cdot \poly{\log N}\cdot \poly{1/\varepsilon}}$.
        \item The number of random bits that are used by the encoder is $\cO{\frac{1}{\varepsilon}\log N}$ and the encoder succeeds with probability $1 - o(1)$.
    \end{enumerate} 
\end{restatable}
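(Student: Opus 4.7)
The plan is to reduce the strong setting to the known-$\rho$ setting by reserving a small \emph{metadata region} $M\subseteq[N]$ inside the codeword to transmit $|\cF|$ and any auxiliary random seed to the decoder. The position of $M$ is determined by the shared random string $\bfr$, ensuring that $M$ is essentially independent of $\cF$; on $M$ I will use an existentially-guaranteed strong-stuck-at code (via \Cref{thm:existencial-const}), while the remaining body $[N]\setminus M$ carries the message via a randomized classical stuck-at code tuned to the known local defect fraction $|\cF\cap([N]\setminus M)|/(N-|M|)$.

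First I would build an efficient randomized classical stuck-at code for a \emph{known} defect fraction, achieving rate $1-\rho-\varepsilon/4$ with $\cO{\log N/\varepsilon}$ random bits and running time $\cO{N\cdot\poly{\log N,1/\varepsilon}}$. A natural route is concatenation: an outer Reed--Solomon code over $\mathbb{F}_{2^m}$ with $m=\Theta(\log N)$ composed with an inner code of length $\ell=\Theta(\log N/\varepsilon)$ drawn pseudorandomly from an $\ell$-wise-independent or $\varepsilon$-biased family seeded by $\cO{\log N/\varepsilon}$ bits. A random inner parity-check matrix has all of its $(1-\rho)\ell$-column submatrices of full rank with probability $1-2^{-\Omega(\ell)}$, and a union bound over the $N/\ell$ blocks yields success with probability $1-o(1)$.

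Second, to simulate the side channel I would choose the metadata region $M$ pseudorandomly so that, by a Chernoff-type tail bound, $|M\cap\cF|\le(\rho+\varepsilon/4)L$ with probability $1-o(1)$, where $L=|M|=\Theta(\poly{\log N,1/\varepsilon})$. This guarantees that the effective defect fraction on $M$ never exceeds $1-3\varepsilon/4$. On $M$ I use a strong-stuck-at code of length $L$ supplied by \Cref{thm:existencial-const} with gap $\varepsilon/4$; since $L$ is polylogarithmic, such a code can be produced by offline preprocessing and hard-coded, incurring no per-encoding overhead. The $\Theta(\varepsilon L)$ usable message bits on $M$ comfortably hold both the index $i=|\cF|$ and the seed for the body encoder, and the decoder can extract both before running the body decoder precisely because \Cref{thm:existencial-const} gives a $\rho$-oblivious decoder.

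The main obstacle I expect is fitting all three sources of randomness (the placement seed for $M$, the inner-code seed for the body, and the bounded-independence source underlying the Chernoff step) into the $\cO{\log N/\varepsilon}$ budget. Direct $\Theta(\log N)$-wise independence for the placement of $M$ alone would cost $\Theta(\log^2 N)$ bits, which is too many; tightening this probably requires an explicit pseudorandom-generator or sampler argument tailored to the single deviation event $|M\cap\cF|>(\rho+\varepsilon/4)L$, so that an $\cO{\log N}$-bit seed suffices. The inner-code pseudorandomness can then reuse or extend this seed to stay inside the $\cO{\log N/\varepsilon}$ bound while preserving the $1-o(1)$ overall success probability.
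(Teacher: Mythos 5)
There is a genuine gap, and it sits at the heart of what makes the strong setting hard: you assume the position of the metadata region $M$ is "determined by the shared random string $\bfr$," but in this model the decoder is a deterministic map $D:\zo^N\rightarrow\bigcup_i\cM_i$ and the $\cO{\frac{1}{\varepsilon}\log N}$ random bits live only at the encoder. Without shared randomness the decoder has no way to locate $M$, and storing the placement seed inside $M$ is circular (you must already know where $M$ is to read it). Your plan to put the body-code seed and $|\cF|$ inside $M$ is fine once $M$ is locatable, but the locating step is exactly the missing idea. The paper breaks this circularity in two ways you do not have: (i) the metadata window is chosen \emph{deterministically} by a pigeonhole/averaging argument (\Cref{clm:encode-meta-ind}) — among the $1/\delta$ disjoint windows of length $\delta N$, one must have defect density at most $\rho+2\delta$ — so no placement randomness is needed at all (this also dissolves the seed-budget worry in your last paragraph); and (ii) the window's index, being only $\cO{\log N}$ bits, is communicated by the weight-modulo trick (\Cref{clm:mod-code}): flip unfrozen bits so that the global Hamming weight encodes the location modulo an appropriate integer. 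This "code" is trivially stuck-at-robust and is decodable with zero side information, which is what lets a deterministic, $\rho$-oblivious decoder bootstrap its way to the full metadata (the paper needs one more recursive layer, encoding the metadata-of-the-metadata the same way, to fit everything).

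Two secondary problems. First, instantiating \Cref{thm:existencial-const} on a block of length $L=\Theta(\poly{\log N,1/\varepsilon})$ and "hard-coding" it is not free: the existential code is a random partition of $\zo^L$ into bins, so storing it and encoding into it costs $2^{\Omega(L)}$ time/space, which already exceeds $\cO{N\cdot\poly{\log N}}$ unless $L=\cO{\log N}$, and even then verifying a good partition offline is not accounted for. The paper avoids this by reusing its own efficient side-channel construction (Algorithm 1) recursively on the metadata block rather than invoking the existential code. Second, your own accounting shows the placement randomness does not fit in the $\cO{\frac{1}{\varepsilon}\log N}$ budget; you flag this as an open obstacle rather than resolving it, so even granting shared randomness the proof is incomplete as written.
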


Our next theorem is a version of \Cref{thm:randomized-const} that is fully deterministic. We note that the cost of making the encoder deterministic results in much higher encoding complexity.
\begin{restatable}{thm}{deterministicConst} \label{thm:deterministic-const}
    For every $\varepsilon > 0$, there exists an $N(\varepsilon)$ such that for every $N > N(\varepsilon)$, there exists an explicit $\varepsilon$-gapped strong-stuck-at-code of length $N$ such that the encoder runs in time $N^{\cO{1/ \varepsilon}}$ and the decoder runs in time $\cO{N\cdot \poly{\log N}\cdot \poly{1/\varepsilon}}$.
\end{restatable}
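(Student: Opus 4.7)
The plan is to derandomize the construction of \Cref{thm:randomized-const} by brute-force enumeration of the seed. Since the randomized encoder consumes only $r = O((1/\varepsilon)\log N)$ random bits, the seed space has size $2^r = N^{O(1/\varepsilon)}$; trying every seed and running the randomized encoder on each fits comfortably inside the target $N^{O(1/\varepsilon)}$ encoder budget.

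The decoder, however, must operate in near-linear time and cannot enumerate seeds, so whichever seed the encoder picks has to be recoverable from the encoded vector alone. My approach is to \emph{fold the seed into the message}: enlarge $\mathcal M_i$ to $\tilde{\mathcal M}_i := \mathcal M_i \times \{0,1\}^r$ and, on input $(v, F, m)$, have the deterministic encoder search for a self-consistent seed $s^*$ such that $E^{\text{rand}}_{s^*}(v, F, (m, s^*))$ is consistent with $v$ on $F$. The decoder then runs the randomized decoder, recovers $(m, s^*)$, and outputs the first coordinate. Because $r = O(\log N)$, the augmented message costs $o(N)$ extra bits — in particular fewer than $\varepsilon N$ — so the $\varepsilon$-gap is preserved, and the decoder inherits the $\tilde O(N \cdot \operatorname{poly}(1/\varepsilon))$ runtime without modification.

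The crux is showing that a self-consistent seed always exists. The bare $1 - o(1)$ guarantee from \Cref{thm:randomized-const} is not immediately sufficient, because a direct counting argument on the bipartite incidence $(s, \mu) \mapsto \mathbf{1}[E^{\text{rand}}_s(v, F, \mu)\text{ is valid}]$ requires the per-input failure probability to drop below $2^{-r} = N^{-\Omega(1/\varepsilon)}$. I therefore expect the proof to sharpen the analysis of the randomized construction from $1 - o(1)$ to an inverse-polynomial failure bound — something that is typically automatic for schemes based on random hash families, random linear codes, or $t$-wise independence with $t = \Omega(r)$ — and then invoke the strengthened bound to locate a good $s^*$ on the diagonal $\mu = (m, s)$. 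This is the step I expect to require the most care.

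If the self-consistent-seed route runs into trouble, a safe fallback is to communicate the seed explicitly through a small ``header'' subblock encoded with the existential code of \Cref{thm:existencial-const}. Since only $r = O(\log N)$ bits need to be transmitted, the header can be taken of length $O(r/\varepsilon) = O((1/\varepsilon^2)\log N)$, and brute-forcing the existential encoder on this short block costs $2^{O(\log N / \varepsilon^2)} = N^{O(1/\varepsilon^2)}$, still well inside the target encoder budget. The one subtlety is that the header coordinates must be interleaved across the full block rather than placed contiguously, so that an adversarial concentration of frozen bits into the header region cannot destroy it.
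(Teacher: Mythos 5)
Your high-level derandomization step --- enumerate all $2^{O(\varepsilon^{-1}\log N)}=N^{O(1/\varepsilon)}$ seeds at encoding time --- is exactly what the paper does: it replaces the sampling in Step~2 of Algorithm~\ref{alg:encode-with-assumption} by a brute-force search for a seed under which every matrix $(A_i)_{\overline{\cF}_i}$ is full rank (such a seed exists because a random one works with probability $1-o(1)>0$ for each fixed $(\bfv,\cF)$), and leaves everything else, including the decoder, untouched. Where your proposal goes wrong is the seed-recovery mechanism, and this is not a cosmetic issue. In the paper's randomized construction the decoder of \Cref{thm:randomized-const} is already seed-free: the seed is part of the metadata $\bfu_1$, which Algorithm~\ref{alg:encode} writes into an \emph{adaptively chosen} low-defect window $\bfv_{22}$, whose own metadata and location are in turn conveyed by the weight-mod tricks of \Cref{clm:mod-code} applied to $\bfv_{21},\bfv_{23}$ and $\bfv_4$. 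Given that, no folding of the seed into the message is needed. If instead the decoder did need the seed as input (as in the Section~3 variant), your ``self-consistent seed'' device is circular: the decoder must know $s^*$ to invert the seed-dependent maps $A_i$ and recover the message $(m,s^*)$ that contains $s^*$. The diagonal counting difficulty you flag as the crux is a symptom of this circularity, not a gap in the paper's argument, and sharpening the failure probability to $2^{-\Omega(r)}$ would still not let the decoder bootstrap.

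Your fallback also fails as stated. The frozen set is adversarial, so a header occupying $O(\varepsilon^{-2}\log N)$ \emph{fixed} positions --- interleaved or not --- can be entirely frozen by an adversary with budget $\rho N\gg \varepsilon^{-2}\log N$; spreading the positions out protects against concentration only if the defects were random, which they are not. The header's location must therefore be chosen after seeing $\cF$ (this is \Cref{clm:encode-meta-ind}), and then that location must itself be communicated to the decoder, which is precisely the recursion the paper terminates with the weight-mod encoding. Separately, the existential code of \Cref{thm:existencial-const} is a random partition of $\zo^{\Theta(\varepsilon^{-2}\log N)}$ into bins; making it explicit is not a matter of brute-forcing the encoder over $2^{O(\varepsilon^{-2}\log N)}$ codewords but of deterministically finding a good partition, and the resulting $N^{O(1/\varepsilon^2)}$-type costs would in any case overshoot the claimed $N^{O(1/\varepsilon)}$ encoder budget. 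The fix is simply to drop both mechanisms and observe that the randomized construction of Section~4 already delivers the seed to the decoder through the cover object, so derandomizing the seed choice inside Steps~\ref{enc2:setp2} and~\ref{enc2:step3} is all that is required.
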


\subsection{Preliminaries}
For an integer $k$, we denote $[k]:= \{1,2,\ldots, k\}$. We shall denote vectors by boldface letters such as $\bfu$ and sets by calligraphic letters such as $\cF$. Note that we denote an interval of positive integers by $[a, b]$ and a vector restricted to a set of coordinates will be denoted by $\bfv_{\cF}$. In particular, a subvector of $\bfv$ starting from index $a$ up to an index $b$ will be denoted as $\bfv_{[a,b]}$.
We shall denote a concatenation of two vectors, $\bfu$ and $\bfv$ by $\bfu \circ \bfv$.
Throughout this paper, $\log x$ will refer to the base-$2$ logarithm.
%\ronicomment{Probably need to add here more notations}

We note here that in many places we drop all floors and ceilings in order to ease notation and the analysis of the codes. However, the loss in the rate due to these roundings is
negligible and does not affect the asymptotic results.

A concept that will be useful in our construction is that of \emph{almost} $k$-wise independent random variables. 
\begin{defi}
    A random variable $X = (X_1, X_2, \ldots, X_r) \in \zo^r$ is said to be $\mu$-almost $k$-wise independent if for all sets of $k$ distinct indices $\{i_1, \ldots, i_k\} \subseteq[r]$ and for all $(x_1,x_2,\ldots, x_k)\in \zo^r$, we have
    \[
    \left| \Pr[X_{i_1} = x_1, \ldots, X_{i_k}=x_k]  - 2^{-k}\right| \leq \mu \;.
    \]
\end{defi}

The following well-known result gives an efficient construction of a collection of $\mu$-almost $k$-wise independent random variables which can be generated from a small number of random bits.
\begin{lemma}[\cite{alon1992simple}] \label{lem:k-wise}
    For every two positive integers $r, k$ and every $\mu >0$, there exists a function $g:\zo^t \rightarrow \zo^r$ with $t=\cO{\log \left( \frac{k\log r}{\mu}\right)}$, such that $g(U_t)$ is a $\mu$-almost $k$-wise independent variable over $\zo^r$, where $U_t$ denotes the uniform distribution over $\zo^t$.
    Moreover, $g(\bfu)$ can be computed in time $\poly{r, 1/\mu}$. 
\end{lemma}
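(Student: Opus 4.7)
The plan is to reduce the problem to constructing a distribution that fools all $k$-sparse linear tests on $\{0,1\}^r$, and then appeal to an algebraic (Reed--Solomon based) construction of such a distribution with the claimed seed length.

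First, I would observe that $\mu$-almost $k$-wise independence in the pointwise sense stated in the lemma is equivalent (up to constants absorbed into $\mu$) to the distribution $X$ being $\mu$-biased against every linear test supported on at most $k$ coordinates. Expanding the indicator of $\{X_{i_1} = x_1, \ldots, X_{i_k} = x_k\}$ in the Fourier basis of $\{0,1\}^k$ gives
\[
\Pr\!\left[X_{i_1} = x_1, \ldots, X_{i_k} = x_k\right] \;=\; \frac{1}{2^k} \sum_{T \subseteq [k]} (-1)^{\bigoplus_{j \in T} x_j}\, E\!\left[(-1)^{\bigoplus_{j \in T} X_{i_j}}\right],
\]
whose $T = \emptyset$ term equals $2^{-k}$ and whose remaining terms are each $2^{-k}$ times the Fourier coefficient of $X$ on the $|T|$-sparse test $S_T = \{i_j : j \in T\}$. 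If every such test has bias at most $\mu$, summing the $2^k - 1$ nontrivial terms yields $|\Pr - 2^{-k}| \le \mu$, which is exactly the pointwise closeness the lemma demands.

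Second, I would construct a generator $g:\{0,1\}^t \to \{0,1\}^r$ that fools all $k$-sparse linear tests with bias at most $\mu$, following the Alon--Goldreich--H\aa stad--Peralta template. Work in the extension field $F = \mathrm{GF}(2^\ell)$ with $\ell = O(\log(k \log r / \mu))$, draw two random field elements $\alpha, \beta \in F$ (using $t = 2\ell = O(\log(k \log r / \mu))$ random bits), and define $X_i$ to be a bit derived from an inner product of $\beta$ with a fixed degree-$i$ polynomial evaluation at $\alpha$. The bias on any $k$-sparse test then reduces to the probability that a nonzero polynomial of degree $O(k \log r)$ vanishes at $\alpha$, which by Schwartz--Zippel is at most $O(k \log r)/|F| \le \mu$ for the chosen $\ell$. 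Each output bit is computable in $\mathrm{poly}(\ell)$ time, so the total evaluation time of $g$ is $\mathrm{poly}(r, 1/\mu)$ as claimed.

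The main obstacle is the second step: the Fourier reduction in the first step is essentially bookkeeping, but tightening the seed length to $O(\log(k \log r / \mu))$ -- rather than the easy $O(\log(r/\mu))$ one gets by invoking a generic small-bias generator as a black box -- requires exploiting the $k$-sparsity of the relevant tests, so that the effective field size depends on $k \log r$ rather than on $r$. Once this structural observation is in place and the polynomial-identity picture is set up, verifying that distinct $k$-sparse tests correspond to nonzero low-degree polynomials is straightforward and the rest of the argument is standard.
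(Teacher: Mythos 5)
Your proposal is correct and follows essentially the same route as the construction the paper recalls from \cite{alon1992simple} in its appendix: the Vazirani-type Fourier reduction from pointwise almost $k$-wise independence to $\mu$-bias against $k$-sparse linear tests, followed by the powering generator over $\mathrm{GF}(2^{\ell})$ composed with a linear code of dual distance $k$ (equivalently, your ``fixed polynomial per output coordinate'' of degree $O(k\log r)$, which is exactly a row of the matrix $A$ applied to the monomial vector) so that the field size scales with $k\log r$ rather than $r$. The only phrasing to tighten is ``degree-$i$ polynomial'': each coordinate $i$ gets a fixed polynomial of degree at most $h=k\log r$, and linear independence of any $k$ rows of the code's generator matrix is what guarantees the polynomial arising from a $k$-sparse test is nonzero.
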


\begin{remark} \label{rem:g-run-time-and-random-bits}
    We shall use \Cref{lem:k-wise} with $r = \cO{N \log N}$, $k = \cO{\log N}$, and $\mu = N^{-\cO{1}}$. In this case, we have that $t = \cO{\log N}$. Furthermore, it can be verified that in this case, the running time of $g$ on an input $\bfu\in \zo^t$ is $\cO{N \cdot \poly{\log N}}$. The details are given in the appendix.
\end{remark}
We have the following simple claim whose proof is deferred to the appendix.
\begin{claim} \label{clm:full-rank-pro}
    Let $m<n$ be positive integers and 
    \[
    A = 
    \begin{pmatrix}
    A_{1,1} & A_{1,2} & \cdots & A_{1,n} \\
    A_{2,1} & A_{2,2} & \cdots & A_{2,n} \\
    \vdots  & \vdots & \ddots & \vdots \\
    A_{m,1} & A_{m,2} & \cdots & A_{m,n} \\
    \end{pmatrix}
    \]
    where $(A_{i,j})_{1\leq i\leq m, 1\leq j \leq n}$ is a $\mu$-almost $n$-wise independent variable. Then, the probability that $A$ does not have full rank is most $2^{m-n} + \mu 2^m$.  
\end{claim}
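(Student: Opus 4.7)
The plan is to bound the failure probability by a union bound over all nonzero coefficient vectors $\bfx \in \zo^m$. Since an $m \times n$ binary matrix fails to have full row rank precisely when some nonzero $\bfx$ annihilates it on the left,
\[
\Pr[A \text{ is not full rank}] \;\leq\; \sum_{\bfx \in \zo^m \setminus \{0\}} \Pr[\bfx^T A = 0].
\]
If I can establish that $\Pr[\bfx^T A = 0] \leq 2^{-n} + \mu$ for every nonzero $\bfx$, then summing over the $2^m - 1$ nonzero vectors yields $(2^m - 1)(2^{-n} + \mu) \leq 2^{m-n} + \mu\cdot 2^m$, which is the claimed bound. So the entire content of the claim reduces to this per-$\bfx$ inequality.

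To prove the per-$\bfx$ bound, I would view $\{\bfx^T A = 0\}$ as the event that $A$, unfolded as a vector in $\zo^{mn}$, lies in the linear subspace $S_{\bfx} := \{B : \bfx^T B = 0\}$, which has codimension $n$. Using the standard Fourier expansion of the indicator of a subspace,
\[
\Pr[A \in S_{\bfx}] \;=\; \frac{1}{2^n}\sum_{\mathbf{b} \in S_{\bfx}^{\perp}} \mathbb{E}\!\left[(-1)^{\mathbf{b} \cdot A}\right],
\]
the zero dual vector contributes exactly $2^{-n}$. The dual $S_{\bfx}^{\perp}$ is the $n$-dimensional span of the tensors $\bfx \otimes \mathbf{e}_j$ for $j \in [n]$, so its $2^n - 1$ nonzero elements have the form $\bfx \otimes \mathbf{v}$ for $\mathbf{v} \in \zo^n \setminus \{0\}$. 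If each of these character expectations has absolute value at most $\mu$, the nonzero contributions add up to at most $2^{-n}(2^n - 1)\mu \leq \mu$, yielding $\Pr[\bfx^T A = 0] \leq 2^{-n} + \mu$.

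The main obstacle is bounding $|\mathbb{E}[(-1)^{\mathbf{b} \cdot A}]|$ by $\mu$ uniformly over these nonzero $\mathbf{b}$. The $\ell_\infty$ definition of $\mu$-almost $n$-wise independence given in the paper only directly controls the joint distribution of any $n$ bits of $A$, and the naive consequence is a character bound of $2^{|T|}\mu$ valid only for weights $|T| \leq n$; but $\mathbf{b} = \bfx \otimes \mathbf{v}$ has Hamming weight $|\bfx| \cdot |\mathbf{v}|$, which can far exceed $n$ when $\bfx$ is heavy. The way I would resolve this is by leaning on the fact that the explicit construction in \Cref{lem:k-wise}, due to Alon et al.~\cite{alon1992simple}, is built on top of a small-bias generator and therefore actually produces a $\mu$-biased distribution on $\zo^{mn}$: this gives $|\mathbb{E}[(-1)^{\mathbf{b} \cdot A}]| \leq \mu$ for every nonzero $\mathbf{b}$ regardless of its weight, uniformly closing the Fourier sum and completing the proof.
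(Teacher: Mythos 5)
Your reduction to the per-vector bound $\Pr[\mathbf{x}^T A = 0] \le 2^{-n} + \mu$ is sound, and it is essentially the same reduction the paper makes, just organized globally rather than row by row: the paper bounds $\Pr\left[r_i \in \mathrm{span}\{r_1,\dots,r_{i-1}\}\right]$ by a union bound over the $2^{i-1}$ coefficient vectors, which is exactly your union bound over nonzero $\mathbf{x}$ grouped by the position of the top nonzero coordinate. Your Fourier expansion of the subspace indicator and the identification of $S_{\mathbf{x}}^{\perp}$ with $\{\mathbf{x}\otimes \mathbf{v}\}$ are also correct. Moreover, you have correctly located the crux: for $|\mathbf{x}|\ge 2$ the event $\mathbf{x}^T A = 0$ involves more than $n$ bits, so the $\ell_\infty$ definition of $\mu$-almost $n$-wise independence gives no control over the characters $\mathbf{x}\otimes\mathbf{v}$ of weight $|\mathbf{x}|\,|\mathbf{v}|>n$; one cannot get the per-$\mathbf{x}$ bound from that definition alone (a mixture of the uniform distribution with the uniform distribution on $\{r_1=r_2\}$ shows the bound genuinely fails for general $\ell_\infty$-almost $n$-wise independent sources). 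The paper's own proof asserts the same per-$\mathbf{x}$ bound citing only that each row is almost $n$-wise independent, so your concern is real and not an artifact of your approach.

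The gap is in your final step. The construction behind \Cref{lem:k-wise} (the third construction of \cite{alon1992simple}, spelled out in the appendix) computes $g = A\cdot s$, where $s\in\zo^{h}$ is a small-bias string of length $h=k\log r$ and $A$ is the $r\times h$ generator matrix of a code with dual distance $k$. Its output is supported on an $h$-dimensional subspace of $\zo^{r}$ with $h\ll r$, so it is emphatically \emph{not} $\mu$-biased on the whole space: every character in the dual of that subspace has bias exactly $1$. The small-bias property of $s$ transfers only to output characters $\chi_T$ with $|T|\le k$, because only then does the dual-distance condition guarantee $\sum_{i\in T}A_i\ne 0$. The characters you need, $\mathbf{x}\otimes\mathbf{v}$, have weight up to $mn\gg n=k$, and with the paper's parameters a single block of $\overline{m_i}\cdot B$ output coordinates can already exceed $h$, so some of these characters really do have bias $1$; hence the uniform bound $\left|\mathbb{E}\left[(-1)^{\mathbf{b}\cdot A}\right]\right|\le\mu$ fails and your Fourier sum does not close. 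To repair the argument one would need the matrix entries to come from a genuinely $\mu$-biased source on the $mn$ relevant bits (e.g., a Naor--Naor generator used directly, at a modest cost in seed length), or a different bound on $\Pr[\mathbf{x}^T A=0]$ that uses only characters of weight at most $n$.
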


\section{Existential result}
In this section, we prove \Cref{thm:existencial-const} which is restated for convenience
\existencialConst*

\begin{proof}
    Let $L$ be an integer such that $L \leq 2\varepsilon^{-1} \leq L+1$ and let $N$ be an integer such that $L+1$ divides $N$. 
    For every $i\in [L]$, define $\cM_i:= \left\lbrace (i, \bfm) \mid \bfm\in \zo^{\frac{N}{L+1}\cdot i} \right \rbrace$.
    Every $\cM_i,i\in [L]$ can be seen as a message space of a specific length, and our encoder, based on the fraction of frozen symbols, will encode a message from the largest possible message space.

    Our strategy will be to randomly assign vectors from %\rg{assign vectors from} 
    $\zo^{N}$ into $\left|\cup_{i=1}^L \cM_i\right|$ bins 
    %\rg{Say bins. Who doesn't love a bins and balls analogy :) ?}
    where each bin will be labeled 
    %\rg{Rephrase and say ``each bin will be labeled''} 
    $B_{i,\bfm}$. 
    Formally, every $\bfv\in \zo^{N}$,
    \[
        \Pr\left[ \bfv \text{ is assigned to } B_{i,\bfm} \right] = \frac{1}{L\cdot 2^{\frac{N}{L+1}\cdot i}} \;.
    \]
    %Formally, Let $\mathcal{P}$ be a probability distribution defined on $\cup_{i=1}^L \cM_i$ where the probability to sample $(i,\bfm)$ from $\mathcal{P}$ is 
    % \[
    % \mathcal{P}_{(i,\bfm)} := \frac{1}{L\cdot 2^{\frac{N}{L+1}\cdot i}} \;.
    % \]
    %\textcolor{blue}{Can we rephrase to something like: More specifically, for any vector $\bfv$, }
    
    %\textcolor{blue}{$Pr(\bfv \text{ is assigned to bin } (i,\bfm) ) = ..$.}
     
    %Now, initialize $B_{i,\bfm}$ for every $(i,\bfm)\in \cup_{i=1}^L \cM_i$ to be the empty set, and then, for every $\bfv\in \zo^N$, sample an element from $\mathcal{P}$ and add $\bfv$ to the bin that corresponds to the sampled element.

    Our encoder, which receives as input a vector $\bfv\in \zo^N$, a set $\cF\subseteq [N]$ of size $\rho N$ performs the following:
    \begin{enumerate}
        \item Sets $j$ to be the largest integer such that $(1-\rho)N \geq \frac{j}{L+1} N + \frac{\varepsilon}{2} N$.
        \item Encodes a message $\bfm \in \zo^{\frac{j}{L+1} N}$ by choosing a vector $\bfu\in B_{j,\bfm}$ such that $\bfv_{\cF}= \bfu_{\cF}$ and will store this vector in the memory. 
    \end{enumerate}
    Note that by the choice of $j$, we have ensured that the gap between the length of the message and the number of unfrozen bits is at least $\varepsilon/2 \cdot N$.  
 %\rg{Let's be a little more pedantic here and say that since $j$ was chosen according to 1), we have that $(1-\rho)N - j/(L+1) N$ is at most blah blah blah and also that we have ensured that the difference between the size of our message space and the number of unfrozen bits is at least $\epsilon/2$.}
    Clearly, the decoder who knows the partition of $\zo^N$ to the sets $B_{i,\bfm}$ will correctly identify the message. 
    Thus, it remains to show that the consistency condition holds with high probability. Namely, that with high probability the second step of our encoder always succeeds. 
    
    We first compute the probability for a specific $\cF$ of size $\rho N$ and a cover vector $\bfv$, there is no $\bfu\in B_{j,\bfm}$ for which $\bfu_{\cF} = \bfv_{\cF}$. Since there are $2^{N - |\cF|}$ vectors $\bfu \in \zo^{N}$ such that $\bfu_{\cF} = \bfv_{\cF}$, the probability that none of them falls in $B_{j,\bfm}$ is at most
    \[
     \left(  1 - \frac{1}{L\cdot 2^{\frac{N}{L+1}\cdot j}} \right) ^{2^{N-|\cF|}}
    \leq \left( 1 - \frac{1}{L\cdot 2^{\frac{N}{L+1}\cdot j}}\right)^{2^{\frac{j}{L+1}N + \frac{\varepsilon}{2} N}} 
    \leq \exp\left( - \frac{1}{L}\right)^{2^{\frac{\varepsilon}{2} N}}
    \leq \exp\left( - \varepsilon \right)^{2^{\frac{\varepsilon}{2} N}} \;.
    \]
     Now, the probability that there exists a vector $\bfv\in \zo^N$, a set $\cF\subseteq [N]$ and a message $\bfm$ (of suitable length) such that the respective set $B_{j,\bfm}$ does not contain a vector that agrees with $\bfv$ on the coordinates specified by $\cF$ is at most,
     \[
     2^N \cdot 2^N \cdot 2^N \cdot \exp\left( - \varepsilon \right)^{2^{\frac{\varepsilon}{2} N}} = \exp \left( \ln2^{3N} - \varepsilon 2^{\frac{\varepsilon}{2} N} \right) \;.
     \]
    Thus, since $\varepsilon$ is constant, the probability that our partition of $\zo^N$ to the sets $B_{i,\bfm}$ indeed yields a strong-stuck-at-code is at least $1 - o(1)$ (the term $o(1)$ goes to zero as $N$ tends to infinity). 
    For every $\rho$, the rate of our probabilistic construction at $\rho$-fraction of defect is at least $1 - \rho - \frac{1}{L+1} - \frac{\varepsilon}{2} \leq 1 - \rho - \varepsilon$. 
\end{proof}

\section{Construction with clean transmission assumption} \label{sec:const-w-assumption}
In this section, we will assume that the encoder can transmit $\cO{\frac{1}{\varepsilon} \log N}$ bits to the decoder where this transmission is errorless. The decoder will use this clean metadata to decode the original message. This construction is a first step towards our final construction which does not assume that there is a clean transmission of bits between the encoder and the decoder.
Throughout this section, we assume that $C$ is a universal constant (independent of $N$) that is known both to the encoder and the decoder. 
Also, we denote by $\Bin{d}{b}$ the function that takes as input an integer $d\in [0, b - 1]$ and outputs its binary representation using $\ceil{\log b}$ bits. 
%\rg{Not a huge fan of the Bin notation. Maybe something like $\left( \left( d \right) \right)_b$? We can also ask eitan if he has a better notation?} \rg{In light of taking care of this, how about $h_b(d)$ for now?}

Our encoding algorithm is given in Algorithm~\ref{alg:encode-with-assumption} and the decoding algorithm is given in Algorithm~\ref{alg:decode-with-assumption}.
In the rest of the section, we prove that 
%\rg{For the next theorem we should state the result in terms of the $\epsilon$-gap. The smaller the gap is the longer our codes have to be to achieve high probability of error.}\ronicomment{Changed}

\begin{thm} \label{thm:randomized-w-assumption-const}
    Let $\varepsilon > 0$. there exists an $N(\varepsilon)$ such that for every $N > N(\varepsilon)$, there exists a randomized $\varepsilon$-gapped strong-stuck-at-code of length $N$ such that
    \begin{enumerate}
        \item The encoder uses $\cO{\frac{1}{\varepsilon}\log N}$ random bits and succeeds with probability $1 - \cO{1/\log N}$ %\rg{Typo here since $O(\log N)>1$ and the term is negative}
        \item The encoder can transmit $\cO{\frac{1}{\varepsilon}\log N}$ bits to the decoder in an errorless transmission.
        \item The encoder and the decoder run in time $\cO{N \cdot \poly{\log N}\cdot \poly{1/\varepsilon}}$ 
    \end{enumerate}
\end{thm}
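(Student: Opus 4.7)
The plan is to build the code from two pseudorandom objects, both derived from a single short seed transmitted over the clean channel. Set the block size $b = \Theta(\log N / \varepsilon)$ and let $B = N/b$. Using \Cref{lem:k-wise} with $r = \cO{N \log N}$, $k = \cO{\log N / \varepsilon}$, and $\mu = N^{-\cO{1/\varepsilon}}$, expand a seed of $t = \cO{\log N / \varepsilon}$ bits into an almost $k$-wise independent string of length $r$. One portion of the string defines a hash $h:[N]\to[B]$ (assigning positions to blocks); the remainder populates the entries of a block-diagonal parity-check matrix $H$ whose $j$-th block $H^{(j)}$ is an $m\times b$ matrix with $m = b(1-\rho - \varepsilon)$. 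The clean channel carries the seed together with the index $j^*$ identifying the active message space $\cM_{j^*}$ (which encodes $\rho$ up to an additive $\varepsilon$), for a total of $\cO{\log N / \varepsilon}$ bits.

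To encode $(\bfv, \cF, \bfm)$, split $\bfm$ into $B$ pieces $\bfm^{(1)},\ldots,\bfm^{(B)}$ of length $m$ each. For each block $j$, let $C_j = h^{-1}(j)$ be the positions hashed to block $j$, $F_j = C_j\cap\cF$ the frozen ones, and $G_j = C_j\setminus\cF$ the free ones. Set $\bfx_{F_j} = \bfv_{F_j}$ and solve the linear system $H^{(j)}_{G_j}\,\bfx_{G_j} = \bfm^{(j)} - H^{(j)}_{F_j}\,\bfv_{F_j}$ for $\bfx_{G_j}$. The decoder reconstructs $h$ and $H$ from the seed and reads off $\bfm^{(j)} = H^{(j)} \bfx_{C_j}$ per block.

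Correctness reduces to two events, each required to hold for every block. (i) \emph{Balance}: $|F_j| \le (\rho + \varepsilon/2)b$ and $|C_j|\ge (1-\varepsilon/4)b$. Since the hash outputs are almost $k$-wise independent, a standard $k$-wise Chernoff-type tail bound yields per-block failure probability $N^{-\Omega(1)}$ for suitable $k$. (ii) \emph{Full row rank}: $H^{(j)}_{G_j}$ has rank $m$. This follows from \Cref{clm:full-rank-pro} applied per block, with failure at most $2^{-(|G_j|-m)} + \mu\,2^m \le 2^{-\Omega(\varepsilon b)} + \mu\,2^b$. A union bound over $B=N/b$ blocks gives overall encoder failure $\cO{1/\log N}$. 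The encoder and decoder each run $B$ Gaussian eliminations on $\cO{b}\times\cO{b}$ submatrices, for total cost $\cO{B\cdot b^3} = \cO{N\cdot \poly{\log N}\cdot \poly{1/\varepsilon}}$.

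The main obstacle is the tight seed budget. The bound $t = \cO{\log N/\varepsilon}$ forces $\log(1/\mu)=\cO{\log N/\varepsilon}$ via \Cref{lem:k-wise}, which in turn caps $b$ at roughly $\log N/\varepsilon$; at the same time, $b$ must be large enough for both the concentration in (i) and the $2^{-\Omega(\varepsilon b)}$ slack in (ii) to overcome the union bound over $B$ blocks. These constraints meet narrowly at $b=\Theta(\log N/\varepsilon)$, and verifying that the required $k$-wise tail bound holds with the small $k$ afforded by the seed length, while simultaneously guaranteeing that \Cref{clm:full-rank-pro} provides a useful per-block rank guarantee, is the most delicate step of the analysis.
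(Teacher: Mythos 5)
Your scheme is genuinely different from the paper's: you hash positions into $N/b$ pseudorandom blocks and assign every block the \emph{same} message load $m=b(1-\rho-\varepsilon)$, so correctness hinges on a balance event --- every block must receive at most $(\rho+\varepsilon/2)b$ frozen positions. The paper instead uses \emph{contiguous} blocks and lets the load adapt: block $i$ carries $m_i = B(1-\rho_i)-2\log N-\log B$ message bits where $\rho_i$ is that block's actual frozen fraction, and each block additionally encodes a pointer to the next non-empty block (the first pointer travels over the clean channel). This removes any need for a concentration argument; the only probabilistic event left is full rank of the restricted matrices, which \Cref{clm:full-rank-pro} handles with $\mu=N^{-C}$ and hence a seed of length $\cO{\varepsilon^{-1}\log N}$. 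Your full-rank half is fine and essentially mirrors the paper's.

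The balance half, however, is a genuine gap, and I do not believe it closes at your parameters. A union bound over $\Theta(N/b)$ blocks forces the per-block deviation probability to be $o(b/N)\approx N^{-1}$. With $k'$-wise independent hash \emph{values}, the moment tail bound for the load $S=\sum_{i\in\cF}\mathbb{1}[h(i)=j]$ is roughly $(k'\,\rho b/\lambda^2)^{k'/2}$ with $\lambda=\varepsilon b/2$; making this $O(N^{-1})$ requires $k'=\Omega(\log N)$ \emph{and} $b=\Omega(k'/\varepsilon^2)$. But each hash value consumes $\log(N/b)=\Theta(\log N)$ of your almost-independent bits, so $k'$-wise independence of the values needs $\Omega(\log^2 N)$-wise independence of the bits; worse, each term in the $k'$-th moment is an event on $\Theta(k'\log N)$ bits, whose probability under $\mu$-almost independence can err by $\mu\cdot 2^{\Theta(k'\log N)}$, so you need $\mu\le 2^{-\Omega(\log^2 N)}$ for the moment computation to survive. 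Either way \Cref{lem:k-wise} then gives $t=\Omega(\log^2 N)$, blowing both the randomness budget and the clean-channel budget of $\cO{\varepsilon^{-1}\log N}$. (With only the $\cO{1/\varepsilon}$-wise independence of hash values that your seed actually affords, the tail bound is a constant, nowhere near $N^{-1}$.) To repair this you would have to either spend $\Theta(\log^2 N)$ bits --- which is what \cite{shpilka2013new} effectively pays --- or abandon uniform per-block loads in favor of an adaptive allocation over deterministic blocks, which is exactly the paper's pointer-chaining device.
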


\paragraph{Comparison with \cite[Theorem 7.1]{shpilka2013new}}
    Note that although our primary aim is to design efficiently strong-stuck-at codes, our work represents an improvement over the setup previously studied by Shpilka \cite[Theorem 7.1]{shpilka2013new} where we assume we have access to a small area of clean memory (equivalently, we have an errorless transmission between the encoder and the decoder) and also the decoder knows the number of stuck-at bits. The next theorem more precisely states the previous work by Shpilka, which will be useful as a basis for comparison.
     
    \begin{thm} \cite[Theorem 7.1]{shpilka2013new}
        Let $\rho < 1$ and let $\bfv\in \zo^N$ containing $\rho N$ frozen bits. There is a randomized encoder and a deterministic decoder such that
        \begin{enumerate}
            \item The encoder can encode $(1 - p - \varepsilon)N$ bits for any constant $\varepsilon > 0$.
            \item The encoder transmits $\cO{\log ^3 N}$ bits to the decoder using an errorless transmission.
            \item The encoder runs in polynomial time in $N$ and $1/\varepsilon$.
        \end{enumerate}
    \end{thm}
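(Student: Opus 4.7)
The plan is a randomized linear-algebraic construction. The encoder generates a pseudorandom block-diagonal matrix $M$ using the generator $g$ of \Cref{lem:k-wise}, then outputs a codeword $\bfc$ satisfying both $M\bfc = \bfm$ and $\bfc_{\cF} = \bfv_{\cF}$; the decoder recovers $\bfm$ by computing $M\bfc$. Since $|\cF|$ is not known at decoding time, I would quantize the defect fraction into $L = \cO{1/\varepsilon}$ levels with target message lengths $k_i \approx (1 - i/L)N$; on input $(\bfv,\cF,\bfm)$ the encoder selects the largest level $i$ leaving a $\tfrac{\varepsilon}{2}N$ gap between unfrozen positions and encoded bits, and transmits $i$ together with the random seed over the clean channel, which fits easily inside the $\cO{\tfrac{1}{\varepsilon}\log N}$-bit budget.

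To build $M$ I would partition $[N]$ into $B = N/\ell$ consecutive blocks of length $\ell = \Theta(\tfrac{1}{\varepsilon^2}\log N)$ and set the $b$-th diagonal block $M^{(b)}$ to have dimensions approximately $(1-\rho_b-\tfrac{\varepsilon}{2})\ell \times \ell$, where $\rho_b$ is the local frozen fraction in block $b$. All entries across the $M^{(b)}$'s are taken as a single evaluation of $g$ with the parameters of \Cref{rem:g-run-time-and-random-bits} (namely $r = \cO{N\log N}$, the $k$-wise parameter $\cO{\log N}$, and $\mu = N^{-c}$ for a large constant $c = c(\varepsilon)$), which uses only $t = \cO{\tfrac{1}{\varepsilon}\log N}$ random bits and runs in time $\cO{N\cdot\poly{\log N}}$. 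Within each block the encoder simply solves the local linear system $M^{(b)}\bfc_{[b]} = \bfm^{(b)}$ with $\bfc$ forced to agree with $\bfv$ on the frozen positions of block $b$.

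The core analytic step is to apply \Cref{clm:full-rank-pro} to each $M^{(b)}$ restricted to the unfrozen columns of block $b$. The restricted entries still form a $\mu$-almost $k$-wise independent family, so \Cref{clm:full-rank-pro} gives per-block failure probability at most $2^{-\Omega(\varepsilon\ell)} + \mu 2^\ell \leq 1/\poly{N}$ once $c$ is taken large enough. A union bound over the $B = \cO{N\varepsilon^2/\log N}$ blocks then yields the promised $\cO{1/\log N}$ overall failure probability. The main obstacle I anticipate is adversarial imbalance of frozen bits across blocks: a single ``overfull'' block with $\rho_b > 1 - \varepsilon/2$ would force $k_b \leq 0$ and break the scheme. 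I would handle this by letting the per-block encoded length $k_b$ depend on the local defect fraction (keeping a $\tfrac{\varepsilon}{2}$ slack per block), broadcasting the short list of indices of overfull blocks over the clean channel, and redistributing the displaced information bits into the remaining blocks; since the adversary has only $\rho N$ frozen bits in total, only $\cO{1/\varepsilon}$ blocks can be overfull, so the metadata still fits inside the clean-channel budget. Given correctness, total runtime on each side is one evaluation of $g$ plus $B$ local Gaussian eliminations on matrices of size $\cO{\ell}\times \cO{\ell}$, totalling $\cO{N\cdot\poly{\log N}\cdot\poly{1/\varepsilon}}$.
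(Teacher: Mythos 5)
This statement is quoted from prior work (\cite[Theorem~7.1]{shpilka2013new}) and the paper does not reprove it; what you have actually sketched is the paper's own Section~3 construction (Algorithm~\ref{alg:encode-with-assumption}, \Cref{thm:randomized-w-assumption-const}), which is the stronger statement that only $\cO{\varepsilon^{-1}\log N}$ clean bits suffice. Your overall architecture --- block partition, almost $k$-wise independent block matrices generated from a short seed, per-block linear systems solved subject to the frozen constraints, \Cref{clm:full-rank-pro} plus a union bound --- matches the paper's. However, there are two genuine gaps in how the decoder is supposed to parse the output.

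First, you let the per-block message length $k_b \approx (1-\rho_b-\varepsilon/2)\ell$ depend on the \emph{local} frozen fraction $\rho_b$, but the decoder never learns $\rho_b$: your clean-channel metadata contains only the seed, the global quantization level $i$, and the overfull-block list. Without $k_b$ the decoder does not know the dimensions of $M^{(b)}$, cannot form the product $M^{(b)}\bfc_{[b]}$, and cannot tell where block $b$'s message bits end inside the concatenated output. The paper resolves exactly this by chaining: each block's encoded payload carries $\log N + \log B$ extra bits giving the index of the next nonempty block and the number of message bits encoded there, so the decoder only needs the first block's position and length from the clean channel. Your proposal has no substitute for this mechanism. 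Second, your claim that ``only $\cO{1/\varepsilon}$ blocks can be overfull'' is false: a block is overfull when it contains more than $(1-\varepsilon/2)\ell$ frozen bits, so with $\rho N$ frozen bits the adversary can make $\Theta(\rho N/\ell)=\Theta(\rho\varepsilon^2 N/\log N)$ blocks overfull (e.g.\ by freezing entire blocks), and broadcasting that list costs up to $\Omega(N)$ clean bits, destroying the budget. (The chaining fix handles this too, since empty blocks are simply skipped by the pointers.) A smaller parameter mismatch: for \Cref{clm:full-rank-pro} each row restricted to the unfrozen columns must be close to uniform, so the independence parameter $k$ must be at least the block length $\ell=\Theta(\varepsilon^{-2}\log N)$; quoting \Cref{rem:g-run-time-and-random-bits} with $k=\cO{\log N}$ without tying the constant to $\varepsilon$ leaves this unjustified, whereas the paper sets $k=B$ equal to the block length.
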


    Note that the construction presented in this section requires only $\cO{\varepsilon^{-1} \cdot \log (N)}$ bits to be transmitted to the decoder in the errorless transmission compared to the $\cO{\log ^3 N}$ bits required by \cite{shpilka2013new}. \footnote{We note that in fact, we could have defined $B = C \cdot \log(N/ \log N)$. In that case, the number of random bits is $\cO{\frac{1}{\varepsilon} \cdot \log (N/ \log N)}$ at the expanse of failure probability which increases to $1 - O(1/C)$. We chose to present the first version for the sake of notations.}

We present also a deterministic version of \Cref{thm:randomized-w-assumption-const}
\begin{thm} \label{thm:explicit-w-assumption-const}
    Let $\varepsilon > 0$. there exists an $N(\varepsilon)$ such that for every $N > N(\varepsilon)$, there exists a explicit $\varepsilon$-gapped strong-stuck-at-code of length $N$ such that
    \begin{enumerate}
        \item The encoder can transmit $\cO{\frac{1}{\varepsilon}\log N}$ bits to the decoder in an errorless transmission.
        \item The encoder runs in time $N^{\cO{1/\varepsilon}}$ and the decoder runs $\cO{N\cdot \poly{\log N}}$.
    \end{enumerate}
\end{thm}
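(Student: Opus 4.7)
The plan is to derandomize the encoder of \Cref{thm:randomized-w-assumption-const} by exhaustive enumeration over its random coins. That encoder uses only $B = \cO{\frac{1}{\varepsilon}\log N}$ random bits, so the total number of distinct seeds is $2^B = N^{\cO{1/\varepsilon}}$. For every fixed input $(\bfv,\cF,\bfm)$ the success probability is at least $1 - \cO{1/\log N}$, which is strictly positive for all sufficiently large $N$, so at least one good seed exists for that specific input.

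First, I would define the deterministic encoder to iterate through all $2^B$ seeds in lexicographic order, simulating the randomized encoder of \Cref{thm:randomized-w-assumption-const} with each seed plugged in place of its random tape, and halt at the first seed $s^\star$ on which the simulation produces a valid codeword consistent with the frozen set $\cF$. The encoder then writes this codeword to memory and uses the clean side channel to transmit $s^\star$ together with whatever metadata the randomized scheme already requires. Since the randomized scheme's clean transmission was already $\cO{\frac{1}{\varepsilon}\log N}$ bits, and the seed has the same asymptotic length, the total errorless transmission of the deterministic scheme stays at $\cO{\frac{1}{\varepsilon}\log N}$ bits.

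Next, the decoder reads $s^\star$ from the side channel and invokes the decoder of \Cref{thm:randomized-w-assumption-const} with its random tape hard-wired to $s^\star$. Because that decoder is already deterministic once the coins and the clean metadata are fixed, the decoding time remains $\cO{N\cdot\poly{\log N}\cdot \poly{1/\varepsilon}}$. The encoding time is the number of seeds, $2^B = N^{\cO{1/\varepsilon}}$, multiplied by the per-seed simulation cost $\cO{N\cdot\poly{\log N}\cdot \poly{1/\varepsilon}}$, which is still $N^{\cO{1/\varepsilon}}$. The $\varepsilon$-gap property is inherited verbatim from the randomized construction because the seed contributes at most $\cO{\frac{1}{\varepsilon}\log N} = o(N)$ extra side-channel bits and zero extra memory bits.

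The main (and essentially only) subtlety is to make sure the derandomization argument is actually sound: we only need a good seed to exist \emph{per input}, not a single universal seed that simultaneously handles every $(\bfv,\cF,\bfm)$. This is exactly what the per-input success probability of \Cref{thm:randomized-w-assumption-const} gives us, so the existence of $s^\star$ in the enumeration is guaranteed for large enough $N$. A minor bookkeeping point is to confirm that the decoder of the randomized scheme is genuinely a deterministic function of (seed, memory contents, clean metadata), so that replaying it with the hard-wired $s^\star$ is well defined; this is immediate from the statement of \Cref{thm:randomized-w-assumption-const}, where the decoder is already deterministic and the seed is the only source of randomness that needs to be communicated.
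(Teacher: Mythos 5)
Your proposal is correct and matches the paper's own argument (given in Remark~\ref{rem:deterministic-proof}): replace the sampling of the seed $\bfu_t$ in Step~2 of Algorithm~\ref{alg:encode-with-assumption} by exhaustive search over all $2^t = N^{\cO{1/\varepsilon}}$ seeds, using the fact that the per-input success probability is positive so a good seed exists for the given $(\bfv,\cF,\bfm)$, and transmit that seed in the (already seed-carrying) clean metadata. The only cosmetic issue is your reuse of the symbol $B$ for the seed length, which the paper reserves for the block length $C\log N$.
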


\paragraph{Notations and preliminaries for Algorithm~\ref{alg:encode-with-assumption}}
    The following notations are used in Algorithm~\ref{alg:encode-with-assumption}
    \begin{itemize}
        \item Let $B = C\cdot \log N$.
        \item We divide $[N]$ into $M:=N/B$ contiguous blocks.
        \item Let $\cF_i\subseteq \cF$ denote the frozen elements that appear in the $i$th block and $\overline{\cF}_i$ the nonfrozen elements in the $i$th block.
        \item Denote by $\rho_i = |F_i|/B$ the fraction of frozen symbols in $i$th block.
    \end{itemize}
    We proceed with a high-level description of Algorithm~\ref{alg:encode-with-assumption}, which consists of three steps where each of which is described in the next three paragraphs.
    
    %In each block, we want to encode three components: bits of the message, the number of bits that are encoded in the next block, and the position of the next block that contains encoded bits of the message.    
    
    Our encoding algorithm will encode the message into $M$ blocks, each of length $B$. At Step~\ref{enc1:step1}, for each block $i$, we compute $m_i$, the number of message bits we will encode in the $i$th block. Note that some of the $m_i$s can be zero as it can be the case that (almost) all the bits of a block are frozen. The total number of bits that are going to be encoded in the $i$th block is denoted by $\overline{m_i}$ and will contain $m_i$, another $\log N $ bits for the position of the next block to be decoded, and another $\log B$ bits that denote the number of encoded message bits in the next block. If we cannot encode message bits in the $i$th block (this happens if we have at most $2\log N + \log B$ unfrozen bits in the block), then we set $\overline{m_i} = 0$. 

    At Step~\ref{enc1:step2}, we generate $B\cdot N$ bits that are $\varepsilon$-almost $B$-wise independent. The first $\overline{m_1}B$ bits will form the matrix $A_1$, then the next $\overline{m_2}B$ bits will form the matrix $A_2$, etc. Overall, at the end of this step, we have $M$ matrices $A_1, \ldots, A_M$. 

    At step~\ref{enc1:step3}, we perform the actual encoding. We only encode bits of our message in blocks for which $\overline{m_i}\neq 0$. For each such block, we solve the linear system 
    \[
    (A_i)_{\overline{\cF}_i} \cdot \bfw_i = \bfm_i \circ \Bin{i'}{\log N} \circ \Bin{m_{i'}}{\log B}
    \]
    where $i'$ is the next block index for which $\overline{m_{i'}}\neq 0$. 
    We note that this step might fail since it can be that $(A_i)_{\overline{\cF}_i}$ does not have full rank. We will prove that this happens with small probability. %\rg{You might want to mention here that this step may fail with some small probability.}
    Finally, we concatenate all the blocks to produce our encoded cover object. Also, we transmit to the decoder the metadata that he needs to decode the message (recall that we assume that this transmission is errorless). This metadata includes the string that generates the matrices $A_1, \ldots, A_M$, the position of the first block that encodes message bits, and the number of message bits that are encoded in that block.
    %\rg{Say something here (or somewhere around here)  about what data you also assume you can transmit error free to the decoder since it'll be important in the decoding.}

\begin{figure}
	\begin{algorithm}[H] 
    	\DontPrintSemicolon
        \LinesNumberedHidden
		\SetNlSty{textbf}{[}{]}
			
		\SetKwInOut{Input}{input}
		\SetKwInOut{Output}{output}
        \SetKwInOut{Notations}{notations}

        \Input{A vector $\bfv \in \zo^N$, a set of frozen indices $\cF\subseteq [N], |\cF| =\rho N $, and message $\bfm\in \zo^m$ where $m \leq N\left(1 - \rho - \frac{2}{C} - \frac{\log(C\log N)}{C\log N}\right)$}
		\Output{A vector $\bfw \in \zo^N$ and $\bfu \in \zo^{*}$.}

        \nlset{1} \label{enc1:step1} 
        \For{every $i\in [M]$}{
            \uIf{$B(1 - \rho_i) > 2\log N + \log B$}{
                Set $m_i := \min \left( B(1- \rho_i) - 2\log N - \log B, m\right)$ \;
                Set $\overline{m_i} := m_i + \log N + \log B$ \;
                Update $m = m - m_i$\;
            }
            \uElse{
                Set $\overline{m_i} := 0$ \;
            }
        }
    
        \nlset{2} \label{enc1:step2} 
        Let $r = B\cdot N$,  $\mu = N^{-C}$, $k = B$ and let $t$ be as given in \Cref{lem:k-wise}. Sample $\bfu_t$ uniformly at random from $\zo^t$ and apply the function $g$ (given in \Cref{lem:k-wise}) to get $\bfa\in \zo^{B\cdot N}$. Use the first $B\cdot m$ bits of $\bfa$ to construct $M$ matrices  
            \[A_1\in \zo^{\overline{m_1}\times B}, \ldots, A_M\in \zo^{\overline{m_M} \times B}\]
        
        \nlset{3} \label{enc1:step3}  
        Let $1\leq i_1< \cdots <i_{M'}\leq M$ be all the indices for which $\overline{m_{i_{j}}} \neq 0$. Also let $i_{M' + 1} = 0$ and $m_{i_{M'+1}} = 0$\;
        
        \For{every $j\in [M']$}{
            \If{$(A_{i_j})_{\overline{\cF}_{i_j}}$ is not full rank}{Declare failure and exit}
            Compute $\bfw_{i_j}\in \zo^{B}$ such that 
            \begin{enumerate}
                \item $A_{i_j}\cdot \bfw_{i_j} = \bfm_{i_j} \circ \Bin{i_{j+1}}{N/ \log N} \circ \Bin{m_{i_{j+1}}}{B}$
                \item $(\bfw_{i_j})_{F_{i_j}} = (\bfv_{i_j})_{F_{i_j}}$
            \end{enumerate}
        }
        \nlset{4} \label{enc1:step4} Return the string $\bfw = \bfw_1\circ \bfw_2 \circ \cdots \circ \bfw_M$ and the string $ \bfu = \bfu_t \circ \Bin{i_1}{N/ \log N} \circ \Bin{m_1}{B}$\;
        
        \caption{Encoding with assumption}
        \label{alg:encode-with-assumption}
        
    \end{algorithm}
\end{figure}

\begin{figure}
	\begin{algorithm}[H] 
    	\DontPrintSemicolon
        \LinesNumberedHidden
		\SetNlSty{textbf}{[}{]}
			
		\SetKwInOut{Input}{input}
		\SetKwInOut{Output}{output}
        \SetKwInOut{Notations}{notations}

        \Input{A vector $\bfv \in \zo^N$ and $\bfu \in \zo^{t + \log B}$}
		\Output{A message $\bfm\in \zo^*$}

        \nlset{1} Identify from $\bfu$ the vector $\bfu_t$, and the values $i$ and $m_i$\;
        \nlset{2} Compute $g(\bfu_t)$ to get a string $\bfa \in \zo^{B\cdot N}$\;
        \nlset{3} 
        \While{$i \neq 0$}{
            Identify the matrix $A_i\in \zo^{m_i\times B}$ from the string $\bfa$\;
            Compute $A_i \bfv_i$ to get $\bfm_i$ and update the next $i$ and $m_{i}$\;
        }
        \nlset{4} Return $\bfm = \bfm_1 \circ \cdots \circ \bfm_M$ \;
        \caption{Decoding with assumption}
        \label{alg:decode-with-assumption}
    \end{algorithm}
\end{figure}

\subsection{Analysis}
\paragraph{Rate.} The number of message bits that we encode in each block $i$, is $m_i = \max( B(1 - \rho_i) - 2\log N - \log B, 0)$.
Thus, the number of bits that we can encode is at least
\begin{align*}
    \sum_{i=1}^M m_i &\geq \sum_{i=1}^M B(1 - \rho_i) - 2\log N - \log B\\
    &= N - |\cF| - 2M\log N -M\log B
\end{align*}
and therefore, the rate of the scheme is at least 
\begin{equation} \label{eq:const1-rate}
    1 - \rho - \frac{2}{C} - \frac{\log (C\log N)}{C\log N} \geq 1- \rho - \frac{3}{C} \;,
\end{equation}
where the inequality follows for large enough $N$.

The following proposition proves the correctness of the algorithm.

\begin{prop} \label{prop:randomized-w-assumption-const}
    Let $\bfv\in \zo^N$, $\cF\subseteq [N]$ where $|\cF| = \rho N$. Let $\bfm \in \zo^{m}$ where $m \leq N(1 - \rho - \frac{3}{C})$ be a message to be encoded. 
    If we execute Algorithm~\ref{alg:encode-with-assumption} on $\bfv$, $\cF$, and $\bfm$, then the following holds
    \begin{enumerate}
        \item The algorithm succeeds with probability at least $1 - \cO{1/(C\log N)}$. Specifically, the only step that might cause the algorithm to fail and abort is Step~\ref{enc1:step3}.
        \item If the algorithm succeeds and outputs the vector $\bfw$ and the metadata $\bfu$, then the decoding algorithm, Algorithm~\ref{alg:decode-with-assumption}, which receives $\bfw$ and $\bfu$ as input, will output $\bfm$.
    \end{enumerate}  
\end{prop}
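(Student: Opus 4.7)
The proof splits into two parts, and my plan is to handle them in order. First, I observe that the only step of Algorithm~\ref{alg:encode-with-assumption} that can abort is Step~\ref{enc1:step3}, which fails precisely when, for some $j\in[M']$, the submatrix $(A_{i_j})_{\overline{\cF}_{i_j}}$ lacks full row rank. Steps~\ref{enc1:step1},~\ref{enc1:step2}, and~\ref{enc1:step4} never abort: they only compute and write their outputs. Conditional on $(A_{i_j})_{\overline{\cF}_{i_j}}$ being full row rank, a valid $\bfw_{i_j}$ always exists: substituting the frozen constraint $(\bfw_{i_j})_{\cF_{i_j}} = (\bfv_{i_j})_{\cF_{i_j}}$ and moving those terms to the right-hand side leaves a system in $(\bfw_{i_j})_{\overline{\cF}_{i_j}}$ whose $\overline{m_{i_j}} \times |\overline{\cF}_{i_j}|$ coefficient matrix is surjective, and hence solvable for any target.

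To quantify part~(1) I will invoke \Cref{clm:full-rank-pro} on each active block. Fix $j$ and set $m := \overline{m_{i_j}}$, $n := |\overline{\cF}_{i_j}| = B(1-\rho_{i_j})$. By Step~\ref{enc1:step1},
\[
m = m_{i_j} + \log N + \log B \leq n - \log N,
\]
and in particular $m \leq B - \log N = (C-1)\log N$. With $\mu = N^{-C}$, and using that the $\mu$-almost $B$-wise independence of $\bfa$ implies $\mu$-almost $n$-wise independence of the relevant $m\cdot n$ entries because $n \leq B$, \Cref{clm:full-rank-pro} gives a per-block failure probability of at most $2^{m-n} + \mu\cdot 2^m \leq 1/N + N^{-C}\cdot N^{C-1} = 2/N$. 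A union bound over the $M' \leq M = N/(C\log N)$ active blocks yields total failure probability $\cO{1/(C\log N)}$, proving part~(1).

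For part~(2), condition on successful encoding. The metadata $\bfu = \bfu_t \circ \Bin{i_1}{N/\log N} \circ \Bin{m_{i_1}}{B}$ lets the decoder reconstruct the same $\bfa = g(\bfu_t)$ used by the encoder and seeds it with $(i_1, m_{i_1})$. I would then induct on $j$: having $(i_j, m_{i_j})$, the decoder forms $\overline{m_{i_j}} = m_{i_j} + \log N + \log B$, locates $A_{i_j}$ inside $\bfa$ via the running offset $B\cdot \sum_{l<j}\overline{m_{i_l}}$, and computes $A_{i_j}\cdot \bfw_{i_j}$; by the defining equation of Step~\ref{enc1:step3} this equals $\bfm_{i_j}\circ \Bin{i_{j+1}}{N/\log N}\circ \Bin{m_{i_{j+1}}}{B}$, from which it reads off $\bfm_{i_j}$ and the pointer $(i_{j+1}, m_{i_{j+1}})$ for the next iteration. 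The loop terminates at the sentinel $i_{M'+1}=0$, and the concatenation $\bfm_{i_1}\circ\cdots\circ \bfm_{i_{M'}}$ equals $\bfm$. The main quantitative obstacle is the tight tuning underlying \Cref{clm:full-rank-pro}: the ``$2\log N + \log B$'' slack in Step~\ref{enc1:step1} and the choice $\mu = N^{-C}$ are precisely calibrated so that each of the two terms contributes $1/N$ per block, just enough for the union bound over $N/(C\log N)$ blocks to remain vanishing.
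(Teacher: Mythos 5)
Your proposal follows essentially the same route as the paper's proof: only Step~\ref{enc1:step3} can abort, the per-block failure event is that $(A_{i_j})_{\overline{\cF}_{i_j}}$ is not of full row rank (and, given full rank, solvability after moving the frozen coordinates to the right-hand side is immediate), \Cref{clm:full-rank-pro} bounds each block's failure probability by $2/N$, a union bound over the at most $N/(C\log N)$ blocks gives the claim, and decodability follows by chaining the pointers $(i_{j+1},m_{i_{j+1}})$ recovered from each decoded block. Your part~(2) is, if anything, slightly more careful than the paper's, since you spell out how the decoder locates $A_{i_j}$ inside $\bfa$ via the running offset $B\sum_{l<j}\overline{m_{i_l}}$.

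One step is justified incorrectly, although the final number happens to survive. You assert that $\mu$-almost $B$-wise independence of $\bfa$ yields $\mu$-almost $n$-wise independence of the relevant $m\cdot n$ entries ``because $n\leq B$.'' This is false in general: restricting an almost $k$-wise independence guarantee to $k'<k$ coordinates degrades the bias by a factor of up to $2^{k-k'}$, since the marginal probability of a $k'$-pattern is a sum over $2^{k-k'}$ completions, each of which may be off by $\mu$. The paper explicitly uses this, applying \Cref{clm:full-rank-pro} with the degraded parameter $2^{B-n}\mu$. With the correct parameter the second term of the claim is $2^{B-n}\mu\cdot 2^{m}\leq 2^{B-n}\cdot N^{-C}\cdot 2^{\,n-\log N}=N^{-C}\cdot 2^{\,B-\log N}=1/N$, which coincides with your $1/N$ only because you paired the (unjustifiably small) undegraded $\mu$ with the loose bound $2^{m}\leq 2^{\,B-\log N}$ instead of the tight $2^{m}\leq 2^{\,n-\log N}$. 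The conclusion stands, but the transfer of almost-independence from $B$ coordinates to $n$ coordinates must carry the $2^{B-n}$ factor.
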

%\rg{In this proposition, be more specific about exactly where the algorithm can fail in the statement of the proposition (say at step 3 or something).}\ronicomment{Added}
\begin{proof}
    Recall that we denote by $\cF_i\subseteq \cF$ the frozen elements that appear in the $i$th block and by $\overline{\cF}_i$ the nonfrozen elements that are in the $i$th block of $\bfv$.
    The step that might fail in Algorithm~\ref{alg:encode-with-assumption} and cause the algorithm to abort is Step~\ref{enc1:step3}. If one of the matrices $(A_1)_{\overline{\cF}_1}, \ldots, (A_M)_{\overline{\cF}_M}$ does not have full rank, say $(A_1)_{\overline{\cF}_1}$, then clearly we have that $\{ (A_1)_{\overline{\cF}_1} \cdot \bfw \mid \bfw \in \zo^{B-\rho_1 B}\}\subsetneq \zo^{\overline{m_1}}$. 
    Thus, there exists a vector in $\zo^{\overline{m_1}}$ that cannot be encoded using this procedure.
    Therefore, in order to be able to encode any message, we must require that the matrices $(A_i)_{\overline{\cF}_i}$ each have full rank.

    We compute the probability that $(A_1)_{\overline{\cF}_1}$ does not have full rank. Note that $(A_1)_{\overline{\cF}_1} \in \zo^{\overline{m_1}\times (B - \rho_1 B)}$ where $\overline{m_1} \leq B - \rho_1 B - \log N$. Also, it is easy to see that a random variable that is $\mu$-almost $k$-wise independent, is also $(2^{k-k'}\mu)$-almost $k'$-independent for every $k'<k$. 
    Thus, according to \Cref{clm:full-rank-pro}, the probability that $(A_1)_{\overline{\cF}_1}$ does not have full rank is at most
    \[
    2^{- \log N} + \mu\cdot 2^{B \rho_1}\cdot  2^{B(1 - \rho_1) - \log N} = \frac{2}{N}\;.
    \]
    %\rg{Is $\epsilon$ above supposed to be $\mu$?}
    Now by union bound, the probability that there exists a matrix among the matrices $(A_1)_{\overline{\cF}_1}, (A_2)_{\overline{\cF}_2},\ldots, (A_M)_{\overline{\cF}_M}$ that does not have full rank is at most 
    $M \cdot (2/N) = 2/(C \log N) = \cO{1/(C \log N)}$. Therefore, Step~\ref{enc1:step3} can fail with probability at most $1- \cO{1/(C\log N)}$. 
    If all the matrices are indeed full rank, then the linear equations at Step~\ref{enc1:step3} all have solutions and therefore, the encoded vector $\bfw\in \zo^N$ is just the concatenation of all the $\bfw_i$s. 
    The second output of the encoder is a vector $\bfu$ which concatenates the string $\bfu_t$ generated in Step~\ref{enc1:step2} with the position of the minimal $i\in [M]$ for which $\overline{m_i} \neq 0$ and the corresponding $m_i$. The last two values correspond to the position of the first block that encodes message bits and the number of message bits that are encoded in this block, respectively.
    %\rg{Can you say what $\bfu$ is here (or remind the reader) since we mention it as part of the decoding in the next paragraph?}
    
    Note that the decoder, which has access to $\bfu$ and knows the value $C$ can read the first $t$ bits to identify $\bfu_t$. Then, reading the following $\log N + \log B$ bits, the decoder knows the identity of the first block, $i\in [M]$, that encodes message bits and the exact number of bits $m_i$ the block encodes. 
    Note here that $i\in [M]$ where $M = N/ (C\log N) $ and that $m_i \leq B - 2 \log N - \log B$, therefore, $\log N$ followed by $\log B$ bits suffice in order to save $i$ and $m_i$, respectively.
    
    Computing $g(\bfu_t)$, the decoder identifies the matrix $A_{i}$ and then simply computes $A_i \bfw_i$ to get $\bfm_i$ and the position of the next block that contains information, $i<j$, and the number of encoded bits in $\bfw_j$. The decoder continues until he reaches the last block containing information. Note that this process stops. Indeed, when the decoder decodes the last block, he encounters that $0$ is encoded as the position of the next block that contains information.    
\end{proof}

We are now ready to prove \Cref{thm:randomized-w-assumption-const}.
\begin{proof}[Proof of \Cref{thm:randomized-w-assumption-const}] 
    Let $C$ be a universal constant and define $\varepsilon = 3/C$. Let $N$ be a large enough integer (depending only on $\varepsilon$) and let $\bfv\in \zo^{N}$ be a cover vector, and $\cF\subset [N]$ be a set of frozen sets of size $\rho N$ where $\rho \in (1 - \varepsilon, 0)$. Then, according to \Cref{prop:randomized-w-assumption-const}, we can encode any $m\in \zo^{(1- \rho - \varepsilon)}$. 
    Note that regardless of $\rho$, the fraction of stuck-at bits, our decoding algorithm always succeeds in decoding the encoded message given an encoded vector and the metadata that was generated by Algorithm~\ref{alg:decode-with-assumption}. 
   
    The rest of the proof analyzes the complexity and the metadata size that is transferred to the decoder using an errorless transmission. Clearly, ~\ref{enc1:step1} of the encoding algorithm takes $\cO{N}$ as we just scan the input cover vector and identify the sets of frozen components in each block. The time complexity of Step~\ref{enc1:step2} is the time that it takes to compute a value of the function $g$ which is $\cO{N\cdot \poly{\log N}}$.
    In Step~\ref{enc1:step3}, we solve $M = \cO{\varepsilon \cdot N/\log N}$ linear equation systems, each contains at most $B = \cO{\varepsilon^{-1} \log N}$ equations. 
    Thus, this step can be performed in $\cO{M\cdot B^3} = \cO{\varepsilon^{-2} \cdot N \cdot \log^2 N}$ time.
    %\rg{Can you just tell me how you're computing these complexities? If we're using matrix inversion on a $B \times B$ matrix its at most $B^3$ (at least $B$ to the power of 2.something) and there are $\cO N$ of them so my lazy counting is roughly $N B^3 = N \log^3 N$. You don't need the $C$ either way due to the big oh notation.}\ronicomment{The number of matrices is $M = O(N/\log N)$}
    Therefore, overall, the encoding algorithm, Algorithm~\ref{alg:encode-with-assumption}, runs in at most $\cO{\varepsilon^{-2} \cdot N \cdot \poly{\log N}}$ time.

    The decoding algorithm reads the value $\bfu_t$ and the value $m_1$ and by applying $g$ on $\bfu_t$, it retrieves the matrices $A_1, \ldots, A_M$. 
    This step takes the time of computing $g$, i.e., $\cO{N \cdot \poly{\log N}}$. Then, recovering each portion of the message $\bfm_i$ is done in $\cO{B^2}$ (simple multiplication of a vector of length $B$ with a matrix of both dimensions $\leq B$). Thus, recovering the entire message $\bfm$ is done in $\cO{M \cdot B^2} = \cO{\varepsilon^{-1} \cdot N \cdot \log N}$. 
    %\rg{Similar comment here with the complexity. Matrix multiplication has complexity at most $B^2$ on a $B\times B$ matrix which gives roughly $N B^2 = N \log^2 N$} 
    Thus, the overall running time of the decoder is $\cO{\varepsilon^{-1} \cdot N \cdot \poly{\log N}}$. 

    The number of random bits the algorithm needs is 
    \[
    \cO{\log\left( \frac{B \log(B \cdot N)}{N^{-C}}\right)} = \cO{C \log \left( N \right)}\;,
    \]
    and since $\varepsilon = 3/C$, the total number of random bits is $\cO{\varepsilon^{-1} \log N}$.
    Note that metadata, $\bfu$, that is generated by Algorithm~\ref{alg:encode-with-assumption} is of length 
    \begin{equation} \label{eq:metadata-length}
    \cO{C \log N} + \log N + \log B = \cO{C\log N} = \cO{\varepsilon^{-1} \log N} \;.
    \end{equation}
\end{proof}

\begin{remark} \label{rem:deterministic-proof}
    Note that to prove \Cref{thm:explicit-w-assumption-const}, one simply needs to change Step~\ref{enc1:step2} from sampling to a brute force search. Namely, for each one of the $N^{\cO{C}}$ vectors in the space $\zo^t$ (recall that $t=\cO{C\log N}$) we will compute the function $g$ and check if all the matrices $(A_i)_{\overline{F_i}}, i\in [M]$ are full rank. 
    Doing this step now takes $N^{\cO{1/\varepsilon}} \poly{\log N}$ time. Clearly, the complexity of this step dominates the complexity of the algorithm. The rest of proof is identical to that of \Cref{thm:explicit-w-assumption-const}.
\end{remark}

\section{Final construction}

Note that the main issue with the previous construction is that it assumes that we can transmit the decoder the metadata $\bfu$ that contains the data required by the decoder to perform the decoding. In this section, we shall overcome this problem. Intuitively speaking, our solution will encode the message but also the metadata and location of the metadata in our cover object.  
In this section, we prove \Cref{thm:randomized-const} which is restated next.
\randomizedConst*

\subsection{Auxiliary claims}
We start by proving two auxiliary claims that will be useful.
%The following claim guarantees the existence of two subvectors in $\bfv$ with specific properties in which we will encode our metadata. 
We shall divide our cover vector into four subvectors, $\bfv = \bfv_1 \circ \bfv_2 \circ \bfv_3 \circ \bfv_4$. In $\bfv_1$ and $\bfv_3$ we will encode our message and in $\bfv_2$ and $\bfv_4$ we will make sure that we have enough unfrozen bits to encode the metadata. 
%\rg{Can we say what's going on with $v_1$ and $v_3$?}
The following claim makes sure that such a partition is indeed feasible. 
Specifically, we will show that for any small enough $\delta$, there exists an interval $[i\cdot \delta N, (i+1)\delta N - 1]$ (we will define $\bfv_2 = \bfv_{[i\cdot \delta N, (i+1)\delta N - 1]}$) such that it contains at most $(\rho + 2\delta) \delta N$ frozen elements and that this interval does not intersect the subvector $\bfv_4$ which contains exactly $N/\log N$ unfrozen bits. 

%\rg{Even in the extended version, i'd move the proofs from 4.1 and 4.2 to the appendix}
\begin{claim} \label{clm:encode-meta-ind}
    Let $\rho, \delta \in (0,1)$ such that $2\delta < 1-\rho$ and let $\cF\subseteq [N]$ be of size $\rho N$. Let $j\in [N]$ be the largest such that $\left| [j+1, N]\cap \overline{\cF} \right| = N/\log N$. Then, there exists an integer $i\in [\floor{1/\delta}]$ such that  
    \begin{enumerate}
        \item $i\cdot \delta N + \delta N - 1 \leq j $
        \item $\left| [i \cdot \delta N, (i + 1) \cdot \delta N - 1] \cap F \right| \leq (\rho + 2\delta )\delta N$
    \end{enumerate}
\end{claim}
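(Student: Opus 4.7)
The plan is a simple averaging argument over the disjoint blocks $B_i := [i\cdot \delta N, (i+1)\cdot \delta N - 1]$ for $i \in [\lfloor 1/\delta\rfloor]$. First I would extract the quantitative consequences of the definition of $j$. Since $|\overline{\cF}\cap [j+1,N]|=N/\log N$ while $|\overline{\cF}|=(1-\rho)N$, we obtain both the lower bound $j\geq (1-\rho)N-N/\log N$ (and the matching upper bound $j\leq N-N/\log N$), and the exact identity $|\cF\cap [1,j]|=\rho N+j-N+N/\log N$.

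Next I would let $\mathcal{I}\subseteq [\lfloor 1/\delta\rfloor]$ be the set of indices for which condition (1) holds, i.e., $(i+1)\delta N-1\leq j$. A direct count shows $|\mathcal{I}|\geq j/(\delta N)-2$. The blocks $\{B_i\}_{i\in \mathcal{I}}$ are pairwise disjoint and all contained in $[1,j]$, so $\sum_{i\in\mathcal{I}}|B_i\cap \cF|\leq |\cF\cap [1,j]|$, and by pigeonhole there is some $i\in \mathcal{I}$ with
\[
|B_i\cap \cF|\ \leq\ \frac{|\cF\cap [1,j]|}{|\mathcal{I}|}\ \leq\ \frac{\delta N\,(\rho N+j-N+N/\log N)}{j-2\delta N}.
\]
Such an $i$ automatically satisfies condition (1), so it only remains to verify that the right-hand side is at most $(\rho+2\delta)\delta N$.

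For the final check I would parameterize $j=(1-\rho+t-1/\log N)N$ with $t\in[0,\rho]$, so that $|\cF\cap [1,j]|=tN$. A short manipulation reduces the target inequality to
\[
t(1-\rho-2\delta)\ \leq\ (\rho+2\delta)\left(1-\rho-2\delta-\tfrac{1}{\log N}\right),
\]
which, using $t\leq \rho<\rho+2\delta$ and the hypothesis $1-\rho-2\delta>0$, holds for all sufficiently large $N$. Conceptually the proof is just pigeonhole on the density of $\cF$; the only subtlety is the bookkeeping of the $N/\log N$ slack and of the floor functions, which is exactly what the $2\delta$ cushion over the natural average density $\rho$ is there to absorb. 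This is the only point where anything like a calculation is required, and it is where I would take the most care.
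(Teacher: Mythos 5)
Your proof is correct and is essentially the same averaging/pigeonhole argument the paper uses: the paper phrases it contrapositively (assuming every block in $[1,j]$ exceeds the $(\rho+2\delta)\delta N$ threshold and deriving $|\cF|>\rho N$), while you take the minimum-density block directly, but both rely on the exact count $|\cF\cap[1,j]|$, disjointness of the $\delta N$-blocks, and the $2\delta$ cushion plus "large enough $N$" to absorb the $N/\log N$ slack and rounding losses.
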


In order to encode a small number of bits, say $\ell$ bits, one could do the following simple trick. 
Let $x$ represent the decimal number that corresponds to our $\ell$ bits of information, then, one can just flip unfrozen bits such that the weight of the resulting vector is $x (\bmod\,2^{\ell})$. 
The following simple claim shows how many unfrozen bits are needed to encode using this method.
\begin{claim} \label{clm:mod-code}
    Let $\bfv \in \zo^N$ and let $d < N$ be an integer so that there are at least $2d$ unfrozen bits in $\bfv$. Then, we can flip at most $d$ unfrozen bits of $\bfv$ to get a vector $\bfw$ such that $\wt{\bfw} \equiv x (\bmod \, d)$ for any $x < d$.
\end{claim}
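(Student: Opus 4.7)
The plan is elementary. Let $w = \wt{\bfv}$, and observe that a single flip of an unfrozen coordinate changes $w$ by $-1$ if that coordinate currently holds a $1$, and by $+1$ if it currently holds a $0$. I would let $a$ denote the number of unfrozen positions of $\bfv$ equal to $1$ and $b$ the number equal to $0$. The hypothesis gives $a + b \geq 2d$, so by pigeonhole $\max(a, b) \geq d$.

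Next, I would split into two symmetric cases. If $a \geq d$, then for each $k \in \{0, 1, \ldots, d\}$ we may flip any $k$ of the unfrozen $1$-bits to obtain a vector of weight $w - k$, using at most $d$ flips. The $d+1$ integers $w, w-1, \ldots, w-d$ cover every residue class modulo $d$, so for any target $x \in \{0, \ldots, d-1\}$ there exists some $k \leq d$ with $w - k \equiv x \pmod{d}$, and the corresponding flips produce the desired $\bfw$. The case $b \geq d$ is handled identically with the roles of $0$ and $1$ swapped: flipping $k$ unfrozen $0$-bits gives weight $w + k$, and $\{w, w+1, \ldots, w+d\}$ again realizes every residue class modulo $d$.

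I expect no real obstacle here. The only conceptual point worth articulating clearly is why the hypothesis is $2d$ unfrozen bits rather than $d$: one needs $d+1$ consecutive reachable weights (to guarantee a complete residue system modulo $d$) all on one side of $w$, and requiring $a + b \geq 2d$ is the cleanest pigeonhole-sufficient budget that forces this on at least one side.
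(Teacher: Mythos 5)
Your proof is correct and follows essentially the same route as the paper's: a pigeonhole argument showing that at least $d$ unfrozen bits share a common value, so flipping the appropriate number $k$ of them (with $k \equiv x - \wt{\bfv} \pmod{d}$, hence $k \leq d-1 < d$) shifts the weight into the desired residue class. Your write-up is slightly more explicit about why the shifts $\{0,1,\ldots,d-1\}$ cover all residues, but there is no substantive difference.
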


We can see that the rate of this encoding method is very small. Indeed, we cannot hope to encode more than $\log N$ bits using this method. 
We will use this method to encode the location of the specific intervals that contain the metadata needed for decoding the message.
\subsection{Encoding and decoding algorithms}

We shall use the following notations throughout this section. Again, $C$ is some universal constant known to the encoder and the decoder.
\begin{itemize}
    \item Let $\delta = 1/C$ and let $B' = \delta N$. Also, assume that $2 \delta < 1 - \rho$.
    \item Let $K$ be the constant implied by \eqref{eq:metadata-length}. Namely, the vector $\bfu$, that is returned from Algorithm~\ref{alg:encode-with-assumption} is of length $K \cdot C\log N$ where $N$ is the length of the cover object.
\end{itemize}

The encoding algorithm is given in Algorithm~\ref{alg:encode} and the decoding algorithm in Algorithm~\ref{alg:decode}. 

Before delving into the details, we give a high-level overview of the encoding algorithm. 
At the first step, we divide $\bfv$ into four contiguous parts, i.e., $\bfv = \bfv_1 \circ \bfv_2 \circ \bfv_3 \circ \bfv_4$ with the premise given in \Cref{clm:encode-meta-ind}. Namely, $|\bfv_2| = B'$ where the number of frozen bits in $\bfv_2$ is at most $(\rho + 2\delta)B'$ and $\bfv_4$ contains exactly $N/ \log N$ unfrozen bits.

The second step invokes Algorithm~\ref{alg:encode-with-assumption} in order to encode the message $\bfm$ into $\bfv_1$ and $\bfv_3$. 
We guarantee that the algorithm will encode only at these parts by adding to the set of frozen bits all the unfrozen bits in $\bfv_2$ and $\bfv_4$. 
Note that this step produces a metadata vector $\bfu_1$ of length at most $KC\log N$ that contains the information needed to decode the message.

The third step first divides $\bfv_2$ into three contiguous parts, $\bfv_2 = \bfv_{21} \circ \bfv_{22} \circ \bfv_{23}$ such that $\bfv_{22}$ is of length $N^{1/2KC}$ and contains at most $(\rho + 2\delta) N^{1/2KC}$ frozen bits. 
Then, we encode $\bfu_1$ in $\bfv_{22}$ using Algorithm~\ref{alg:encode-with-assumption}. This produces a metadata vector $\bfu_2$ of length $KC\cdot \log N^{1/2KC} = \log \sqrt{N}$ which can be represented by a decimal number $U\leq \sqrt{N}$.
We shall encode $\bfu_2$ in $\bfv_{21}$ and $\bfv_{23}$ by flipping at most $2 \sqrt{N}$ bits to make sure that $\wt{\bfv_2}\equiv U (\bmod \, \sqrt{N})$ (see \Cref{clm:mod-code}).

Finally, at the fourth step, we encode the starting position of $\bfv_2$ and $\bfv_{22}$. We will see that both positions can be identified using only $\log (N^{1 - 1/2KC})$ bits. Therefore, by \Cref{clm:mod-code} we can encode this information in $\bfv_4$ by flipping $2 N^{1 - 1/2KC}$ bits and recall that we have $N/\log N$ unfrozen bits there. 

\begin{figure}
    \centering
    \includegraphics[scale=0.5]{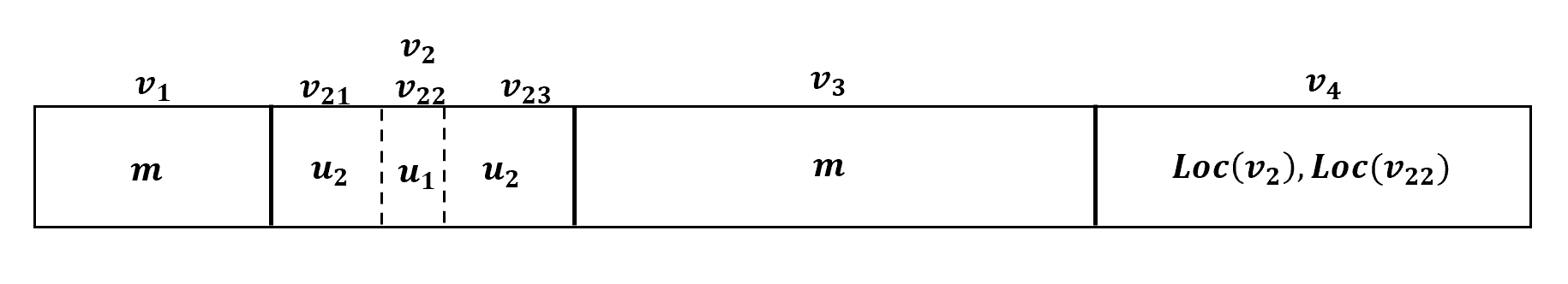}
    \caption{The message $m$ is encoded using Algorithm~\ref{alg:encode-with-assumption} in $\bfv_1$ and $\bfv_3$. Then the metadata $\bfu_1$ that is needed to decode $m$ is encoded in $\bfv_{22}$. 
    The metadata $\bfu_2$ that is needed to decode $\bfu_2$ is encoded using \Cref{clm:mod-code} in $\bfv_{21}$ and $\bfv_{23}$ and the locations of $\bfv_2$ and $\bfv_{22}$ are encoded in $\bfv_4$}
    \label{fig:enter-label}
\end{figure}

\begin{figure}
	\begin{algorithm}[H] 
        \label{alg:encode}
        \caption{Encode}
    	\DontPrintSemicolon
        \LinesNumberedHidden
		\SetNlSty{textbf}{[}{]}
			
		\SetKwInOut{Input}{input}
		\SetKwInOut{Output}{output}
        \SetKwInOut{Notations}{notations}

        \Input{A vector $\bfv \in \zo^N$, a set of frozen indices $\cF\subseteq [N], |\cF| = \rho N$, and a message $\bfm$, of length $ < (1- \rho - \frac{5}{C})N$}
		\Output{A vector $\bfw \in \zo^N$.}
        
        \nlset{1}  \label{enc2:setp1} Find the maximal $j\in [N]$ such that there are at least $N/\log N$ unfrozen coordinates in $\bfv$ to the right of $j$. 
        Find $i\in [j/\delta N]$ such that 
        \[
        \left| [i \cdot B', (i+1)\cdot B' - 1] \cap F \right| \leq \left(\rho +\delta \right) B'
        \]
        and $(i + 1) \cdot \delta N \leq j$.
        Denote $\bfv = \bfv_1\circ \bfv_2 \circ \bfv_3 \circ \bfv_4$ and $\cF = F_1 \cup F_2 \cup F_3 \cup F_4$
        where
        \begin{align*}
            \bfv_1 &= \bfv_{[1,iB' - 1]}  &F_1 &= [1,iB' - 1] \cap F\\
            \bfv_2 &= \bfv_{[i\cdot B', (i+1)\cdot B' - 1]} &F_2 &= [i\cdot B', (i+1)\cdot B' - 1] \cap F  \\
            \bfv_3 &= \bfv_{[(i+1)\cdot B', j-1]} &F_3 &= [(i+1)\cdot B', j-1] \cap F\\
            \bfv_4 &= \bfv_{[j:N]} &F_4 &= [j:N] \cap F
        \end{align*}
        
        \nlset{2} \label{enc2:setp2} Run Algorithm~\Ref{alg:encode-with-assumption} with $\bfv$, $\cF_1\cup [i\cdot B', (i+1)\cdot B' - 1] \cup F_3 \cup [j,N]$, and $\bfm = \bfm$. 
        Denote the first output by $\bfw_1\circ\bfv_2 \circ \bfw_3 \circ \bfv_4$ where $|\bfw_1| = |\bfv_1|$ and $|\bfw_3| = |\bfv_3|$ and second output as $\bfu_1$\;
        
        \nlset{3} \label{enc2:step3} Find $i'\in [i \cdot B', (i+1)\cdot B' - 1]$ such that $i$ is a multiple of $N^{1/2KC}$ and 
        $|[i', i' + N^{1/2KC} - 1] \cap \cF_2| \leq (\rho + 2\delta)N^{1/2KC}$\;
        Denote $\cF_2' = |[i', i' + N^{1/2KC} - 1] \cap \cF_2|$ and $\bfv_2 = \bfv_{21}\circ \bfv_{22} \circ \bfv_{23}$ where
        \begin{align*}
            \bfv_{21} &= \bfv_{[i \cdot B', i' - 1]}\\
            \bfv_{22} &= \bfv_{[i', i' + N^{1/2KC} - 1]}\\
            \bfv_{23} &= \bfv_{[i' + N^{1/2KC}, (i+1) B' - 1]}
        \end{align*}
        \qquad \nlset{3.1} \label{enc2:step3-1} Run Algorithm~\Ref{alg:encode-with-assumption} with $\bfv = \bfv_{22}$, $\cF = F_2'$, and $\bfm = \bfu_1$. Denote the output by $\bfw_{22}$ and $\bfu_2$. Let $U\in [\sqrt{N}]$ be the decimal number that corresponds to $\bfu_2$\;
        \qquad \, \nlset{3.2} \label{enc2:step3-2} Flip unfrozen bits in $\bfv_{21}$ and $\bfv_{23}$ to get $\bfw_{21}$ and $\bfw_{23}$ so that for
        $\bfw_2 := \bfw_{21} \circ \bfw_{22} \circ \bfw_{23}$ it holds that $\wt{\bfw_2} \equiv U (\bmod \, \sqrt{N})$\;
        \nlset{4} \label{enc2:setp4} Let $\bfd = \Bin{i}{1/ \delta} \circ \Bin{i'}{\delta \cdot N^{1 - \frac{1}{2KC}}}$ and denote by $d \in \left[N^{1 - \frac{1}{2KC}} \right]$ the integer whose binary representation is $\bfd$\;
        \,\,\,Flip unfrozen bits in $\bfv_4$ to get the vector $\bfw_4$ where it holds that $\wt{\bfw_1 \circ \bfw_2 \circ \bfw_3 \circ \bfw_4} \equiv d (\bmod \, N^{1 - \frac{1}{2KC}})$ 
    \end{algorithm}
\end{figure}

\begin{figure}
	\begin{algorithm}[H] 
        \label{alg:decode}
        \caption{Decode}
    	\DontPrintSemicolon
        \LinesNumberedHidden
		\SetNlSty{textbf}{[}{]}
			
		\SetKwInOut{Input}{input}
		\SetKwInOut{Output}{output}
        \SetKwInOut{Notations}{notations}

        \Input{A vector $\bfv \in \zo^N$}
		\Output{A message $\bfm\in \zo^*$}

        \nlset{1} \label{dec2:step1} Let $d \equiv \wt{\bfv} (\bmod \, N^{1 - \frac{1}{2KC}})$ and from $\Bin{d}{ N^{\frac{1}{2KC}}}$ identify $i$, the starting of $\bfv_2$, and $i'$, the position inside $\bfv_2$ where $\bfu_1$ is encoded\;
        \nlset{2} \label{dec2:step2} Let $U \equiv \wt{\bfv_{[i, (i+1) \delta N - 1]}} (\bmod \, \sqrt{N})$\;
        \nlset{3} \label{dec2:step3} Run Algorithm~\ref{alg:decode-with-assumption} with input $\bfv_{[i', (i' + 1) N^{1/2KC} - 1]}$ and $\Bin{U}{\sqrt{N}}$ to get $\bfu_1$ \;
        \nlset{4} \label{dec2:step4} Run Algorithm~\ref{alg:decode-with-assumption} with input $\bfv$ and $\bfu_1$ to get $\bfm$\;
    \end{algorithm}
\end{figure}

\subsection{Analyses}
We start by analyzing the rate and then proceed to show the correctness of the algorithms.
\paragraph{Rate.}
Our message $\bfm$ is encoded in Step~\ref{enc2:setp2} by invoking Algorithm~\ref{alg:encode-with-assumption} with $\bfv = \bfv_1 \circ \bfv_2 \circ \bfv_3 \circ \bfv_4$ and a set of frozen elements of size at most $\rho N + \delta N + \frac{N}{\log N}$ (We enlarge the set of frozen elements by adding to $\cF$ all the coordinates of $\bfv_2$ and $\bfv_4$). 
Thus, by the premises of Algorithm~\ref{alg:encode-with-assumption} (see inequality \eqref{eq:const1-rate}), for large enough $N$, we can encode up to
\begin{equation} \label{eq:mess-length-const2}
    (1- \rho -\delta)N - \frac{N}{\log N} - \frac{3N}{C}
\end{equation}
bits which implies that the rate is
\[
 1- \rho - \delta - \frac{3}{C} - \frac{1}{\log N}  \geq 1- \rho - \frac{5}{C}
\]
where the inequality holds for large enough $N$ and by recalling that $\delta = 1/C$.

The correctness is given in the following proposition
\begin{prop} \label{prop:randomized-const}
    Let $\bfv\in \zo^N$, $\cF\subseteq [N]$ where $|\cF| = \rho N$. Let $\bfm \in \zo^{m}$ where $m \leq N(1- \rho - \frac{5}{C} )$ be a message to be encoded. 
    Then, applying Algorithm~\ref{alg:encode} on $\bfv$, $\cF$, and $\bfm$ succeeds with probability at least $1 - \cO{1/\log(N)}$. Furthermore, if Algorithm~\ref{alg:encode} succeeds and outputs the vector $\bfw$ then the decoding algorithm, Algorithm~\ref{alg:decode}, which receives $\bfw$ as input, will output $\bfm$. 
\end{prop}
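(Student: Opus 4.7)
The plan is to verify that each of the four steps of Algorithm~\ref{alg:encode} either is feasible by combinatorial considerations (Steps~\ref{enc2:setp1} and \ref{enc2:step3}) or succeeds with known probability (the two invocations of Algorithm~\ref{alg:encode-with-assumption}), to union-bound the failure probabilities, and finally to trace Algorithm~\ref{alg:decode} in reverse and check that it outputs $\bfm$.

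First I would verify the partition in Step~\ref{enc2:setp1} by applying Claim~\ref{clm:encode-meta-ind} with $\delta = 1/C$, noting that the hypothesis $m \le N(1 - \rho - 5/C)$ forces $2\delta < 1-\rho$ so the claim applies. For Step~\ref{enc2:setp2} I would compute the artificially-enlarged frozen set handed to Algorithm~\ref{alg:encode-with-assumption}: its size is at most $\rho N + B' + N/\log N \le (\rho + 2\delta)N$ once $N$ is sufficiently large, so by the rate guarantee \eqref{eq:const1-rate} the message $\bfm$ fits inside the capacity, the call succeeds with probability $1 - O(1/\log N)$ by Proposition~\ref{prop:randomized-w-assumption-const}, and it returns $\bfu_1$ of length at most $KC\log N$.

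Next, for Step~\ref{enc2:step3} I would use a pigeonhole argument on $\bfv_2$: partitioning it into $B'/N^{1/2KC}$ consecutive sub-blocks aligned at multiples of $N^{1/2KC}$, at least one sub-block contains at most a $(\rho + 2\delta)$ fraction of frozen coordinates, yielding the required $i'$. The inner call to Algorithm~\ref{alg:encode-with-assumption} in Step~\ref{enc2:step3-1} operates on a vector of length $N^{1/2KC}$ and encodes $|\bfu_1| = O(\log N) = o(N^{1/2KC})$ bits, so the rate constraint holds trivially and the call succeeds with probability $1 - O(1/\log N^{1/2KC}) = 1 - O(1/\log N)$, producing $\bfu_2$ of length $KC \log N^{1/2KC} = \tfrac12 \log N$, hence $U \le \sqrt N$. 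For Step~\ref{enc2:step3-2} I would apply Claim~\ref{clm:mod-code}: $\bfv_{21}\circ\bfv_{23}$ has at least $(1 - \rho - 2\delta)(B' - N^{1/2KC}) = \Omega(N) \gg 2\sqrt{N}$ unfrozen coordinates, so any target weight modulo $\sqrt{N}$ is achievable. Step~\ref{enc2:setp4} requires encoding $d \in [N^{1 - 1/2KC}]$; Claim~\ref{clm:mod-code} then demands at least $2N^{1 - 1/2KC}$ unfrozen bits in $\bfv_4$, and $\bfv_4$ has $N/\log N$ unfrozen bits, which exceeds $2 N^{1-1/2KC}$ once $N^{1/2KC} \ge 2\log N$.

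A union bound over the two randomized invocations yields total failure probability $O(1/\log N)$, matching the statement. For decoding correctness, I would check that Algorithm~\ref{alg:decode} reads the weights in the reverse order of the writes: $\wt{\bfv}\bmod N^{1-1/2KC}$ recovers $d$ (hence $i,i'$); then $\wt{\bfv_{[i,(i+1)\delta N - 1]}}\bmod\sqrt{N}$ recovers $U$; then Algorithm~\ref{alg:decode-with-assumption} on $\bfv_{22}$ with metadata $U$ produces $\bfu_1$; and finally Algorithm~\ref{alg:decode-with-assumption} on $\bfv$ with metadata $\bfu_1$ produces $\bfm$. The main subtlety I expect to have to confront is making sure the later bit-flips in Steps~\ref{enc2:step3-2} and~\ref{enc2:setp4} touch only coordinates that the earlier encoder had already declared frozen, so that the earlier encodings are not disturbed; this holds because Step~\ref{enc2:setp2} froze all of $\bfv_2$ and $\bfv_4$ when invoking Algorithm~\ref{alg:encode-with-assumption}, and the inner invocation in Step~\ref{enc2:step3-1} only writes to $\bfv_{22}$, leaving $\bfv_{21},\bfv_{23}$ and $\bfv_4$ free for the modular adjustments.
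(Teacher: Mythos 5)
Your proposal is correct and follows essentially the same route as the paper's proof: feasibility of the partition via Claim~\ref{clm:encode-meta-ind}, the rate/length check for the two invocations of Algorithm~\ref{alg:encode-with-assumption}, an averaging argument for $i'$, Claim~\ref{clm:mod-code} for the two modular-weight encodings, a union bound for the failure probability, and a reverse trace of the decoder. Your explicit observation that the later bit-flips only touch coordinates declared frozen in the earlier invocations is a point the paper leaves implicit, and it is worth stating.
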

\begin{proof}
    First note that \Cref{clm:encode-meta-ind} guarantees that the partition that we perform in Step~\ref{enc2:setp1} is indeed possible. 
    
    In Steps~\ref{enc2:setp2}, we invoke Algorithm~\ref{alg:encode-with-assumption}. 
    In doing that, we have to make sure that the input we give the algorithm is valid. Specifically, if one wishes to encode a message $\bfm$ of length $m$ in a vector $\bfv$ of length $\overline{N}$ with a set of frozen indices $\cF$, then by \eqref{eq:const1-rate} we need that
    \begin{equation} \label{eq:mess-len-ineq}
        |\bfm| \leq \overline{N}  - |\cF| - \frac{3\overline{N}}{C}\;.
    \end{equation}
    We already showed in \eqref{eq:mess-length-const2} what is the maximal message length that can be encoded in Step~\ref{enc2:setp2} and that our message length is below that threshold for large enough $N$.

    In Step~\ref{enc2:step3}, we focus just on $\bfv_2 = \bfv_{[i\cdot B', i\cdot  B' - 1]}$. We first find an index $i'\in [i\cdot B', i\cdot  B' - 1]$ that is a multiple of $N^{1/2KC}$ such that $\bfv_{22} = \bfv_{[i', i' + N^{1/2KC} - 1]}$ has at most $(p + \delta)N^{1/2KC}$ frozen bits. 
    Since $\bfv_2$ contains at most $(\rho + \delta)B'$ frozen bits, such an index must exist by a simple averaging argument. 
    Then, in Step~\ref{enc2:step3-1}, our goal is to encode $\bfu_1$, the metadata that was returned at the previous step, in $\bfv_{22}$. 
    By the proof of \Cref{thm:randomized-w-assumption-const}, the length of $\bfu_1$ is $K C\log N$ (recall that $K$ is the constant implied by \eqref{eq:metadata-length}). 
    Now, since $|\bfv_{22}| = N^{1/2KC}$ and $|\cF_2'| \leq (\rho + \delta) N^{1/2KC}$ ($\cF_2'$ is the set of frozen coordinates in $\bfv_22$), then, for large enough $N$, 
    inequality \eqref{eq:mess-len-ineq}, holds with $\bfm = \bfu_1$, $\cF = \cF_2'$, and $\overline{N} = N^{1/2KC}$.
    Recall that Algorithm~\ref{alg:encode-with-assumption} returns the encoded vector, which we call $\bfw_22$, and a metadata vector, $\bfu_2$, that is needed to decode $\bfw_22$.   The length of $\bfu_2$ is $KC\log (N^{1/2KC}) = \log (\sqrt{N})$ and our next goal it using \Cref{clm:encode-meta-ind}.
    
    To encode $\bfu_2$ in Step~\ref{enc2:step3-2}, we represent it using a decimal number $U$ which is at most $\sqrt{N}$. 
    Observe that the number of unfrozen bits in $\bfv_{21}$ and $\bfv_{23}$ is at least $(1-\rho - 2\delta)\delta N - N^{1/2KC}$ which is greater than $2\sqrt{N}$, for large enough $N$. 
    Therefore, by \Cref{clm:mod-code}, we can flip $\sqrt{N}$ unfrozen bits in $\bfv_{21}$ and $\bfv_{23}$ and make sure that the resulting weight of $\bfw_2 = \bfw_{21} \circ \bfw_{22} \circ \bfw_{23}$ will be $U (\bmod \, \sqrt{N})$.
    We now compute what is the failure probability of Steps~\ref{enc2:setp2} and ~\ref{enc2:step3}. Recall that Algorithm~\ref{alg:encode-with-assumption} can fail with probability $\cO{1/ (C\log N)}$, therefore, the failure probabilities of Step~\ref{enc2:setp2} and ~\ref{enc2:step3} are $\cO{1/ (C\log N)}$ and $\cO{2K/\log N}$.

    To convince ourselves that Step~\ref{enc2:setp4} is feasible, we note that the maximal value $i$ can take is upper bounded by $C$ and that the value of $i'$ is upper bounded by $(N/C) \cdot N^{1/2KC} = N^{1 - 1/2KC}/C$. 
    Therefore, $|\Bin{i}{C} \circ \Bin{i'}{N^{1 - 1/2KC}/C}| = \log{N^{1 - 1/2KC}}$ which implies that there exists a decimal number $U' \leq N^{1 - 1/2KC}$ that corresponds uniquely to the values $i$ and $i'$. 
    Note that as the number of unfrozen bits in $\bfv_4$ is $N/\log N > 2N^{1 - 1/2KC}$ (where the inequality is for large enough $N$), by \Cref{clm:mod-code}, we can flip these unfrozen bits in $\bfv_4$ to make sure that the weight of $\bfw$ is $U' (\bmod \,N^{1 - 1/2KC})$.

    As for the decoder. Note that by computing the weight of $\bfv$ (in Step~\ref{dec2:step1}), the decoder knows the values of $i$ and $i'$. 
    Thus, he knows that he needs to compute $\wt{v_{[i, i + B' - 1]}}$ in order to get the metadata that is needed for decoding $\bfu_1$ from $\bfv_{[i', i' + N^{1/2KC} - 1]}$. 
    Once he extracts $\bfu_1$ from $\bfv_{[i', i' + N^{1/2KC} - 1]}$ in Step~\ref{dec2:step3}, he can proceed to Step~\ref{dec2:step4} and decode the message.
\end{proof}

\begin{proof}[Proof of \Cref{thm:randomized-const}]
    Let $C$ be a universal constant and define $\varepsilon = 5/C$. Note here that $\delta = 1/C < (1-\rho)/5$ and thus our assumption that $2\delta < 1-\rho$ holds. Let $N$ be a large enough integer (depending only on $\varepsilon$) and let $\bfv\in \zo^{N}$ be a cover vector, and $\cF\subset [N]$ be a set of frozen sets of size $\rho N$ where $\rho \in (1 - \varepsilon, 0)$. Then, according to \Cref{prop:randomized-const}, we can encode any $m\in \zo^{(1- \rho - \varepsilon)}$. 
    Note that regardless of $\rho$, the fraction of stuck-at bits, our decoding algorithm always succeeds in decoding the encoded message given an encoded vector Algorithm~\ref{alg:encode}. 
    
    We are left to show that the complexity is $N\cdot \poly{N} \cdot \poly{1/\varepsilon}$ for both the encoder and the decoder. 
    Clearly, Step~\ref{enc2:setp1} can be done in $\cO{N}$. Indeed, identifying a subvector of a specific length with a maximal number of unfrozen bits requires a single scan of the entire input vector.
    Steps~\ref{enc2:setp2} and ~\ref{enc2:step3} both invoke Algorithm~\ref{alg:encode-with-assumption}, and thus their running time is $N\cdot \poly{N} \cdot \poly{1/\varepsilon}$. 
    Steps~\ref{enc2:setp4} requires at most $\cO{N}$ as we need just to compute the weight of a vector and then flip at most $N^{1 - 1/2KC}$ bits. Note that finding the bits that need to be flipped takes also $\cO{N}$ time since we need to find the dominant symbol in the unfrozen bits ($0$ or $1$) and then flip the first unfrozen occurrences of that symbol. Thus, the encoder runs in $N\cdot \poly{N} \cdot \poly{1/\varepsilon}$ time, as desired.

    We analyze now the decoder. Steps~\ref{dec2:step1} and~\ref{dec2:step2} run in time $\cO{N}$ as we compute the weight of a vector and perform a casting of an integer in decimal representation to binary representation. Steps~\ref{dec2:step3} and~\ref{dec2:step4} invoke Algorithm~\ref{alg:decode-with-assumption} whose running time is $N\cdot \poly{N} \cdot \poly{1/\varepsilon}$. Thus, the total running time is again $N\cdot \poly{N} \cdot \poly{1/\varepsilon}$.
\end{proof}

\begin{remark}
    As discussed in \Cref{rem:deterministic-proof},
    to prove \Cref{thm:deterministic-const}, which is the deterministic version of \Cref{thm:randomized-const}, we change the second step in Algorithm~\ref{alg:encode-with-assumption} to a brute force step. This change affects the complexity of Step~\ref{enc2:setp2} and~\ref{enc2:step3} which now becomes $N^{\cO{1/\varepsilon}}$.
    The rest of the proof is identical to the one of the randomized construction.
\end{remark}
\section{Acknowledgements}
The first author would like to thank Dean Doron and Jo\~ao Ribeiro for helpful discussions about \cite{alon1992simple}. 
\bibliographystyle{alpha}
\bibliography{refs}

\newpage
\appendix
\section{Appendix}
\subsection{The complexity of $g$ from \Cref{lem:k-wise}}
We briefly recall the third construction given in \cite{alon1992simple}. 

\begin{itemize}
    \item Let $h := k\log r$ and let $A\in \mathbb{F}_2^{r \times h}$ be a generating of binary code whose dual distance is exactly $k$.
    \item Let $t := \log \frac{h}{\mu}$
    \item Let $x,y\in \mathbb{F}_{2^{t/2}}$, where $\mathbb{F}_{2^{t/2}}$ is the finite field with $2^{t/2}$ elements. 
            Note that $x$ and $y$ can be viewed also as elements in $\zo^{t/2}$ as $\mathbb{F}_{2^{t/2}} \cong \mathbb{F}_2 ^{t/2}$.
\end{itemize}

The function $g:\zo^{t} \rightarrow \zo^r$ is defined by 
\[
g(x,y) = A \cdot \left( \langle x^0,y \rangle, \langle x^1, y \rangle, \ldots, \langle x^{h-1}, y \rangle \right)
\]

where $\langle \cdot, \cdot \rangle$ is the mod two inner product. By \cite{alon1992simple}, $g(U_t)$ is $\mu$-wise $k$ independent  variable over $\zo^r$ where $U_t$ is the uniform distribution over $U_t$.

As for the complexity of computing $g$. Note that $\langle x^i, y \rangle$ can be performed in $\poly{t}$ time and we perform this operation $h$ times. The multiplication of the vactor by the matrix takes $\cO{h \cdot r}$ operations.
In total, we perform, 
\[
\cO{h\cdot \poly{t} + h\cdot r} = \cO{k \log r \cdot \poly{\log \frac{k \log r}{\mu}} + r \cdot k\log r}\;,
\]
and since in our settings, $r = \cO{N\log N}$, $k = \cO{\log N}$, and $\mu = N^{\cO{1}}$, we get that the complexity of $g$ is 
\[
\cO{\poly{\log N} + N\cdot \poly{\log N}} = \cO{N\cdot \poly{\log N}}\;.
\]

\subsection{Missing proofs}
\begin{proof}[Proof of \Cref{clm:full-rank-pro}]
    Denote by $r_i$ the $i$th row. The matrix $A$ has full rank if and only if for any $i\in [m]$, 
    \begin{itemize}
        \item The rows $r_1, \ldots, r_{i-1}$ are linearly independent and,
        \item $r_i \notin \text{span} \{r_1,\ldots, r_{i-1} \}$.
    \end{itemize}
    Thus, 
    \begin{align*}
        \Pr[A \text{ has full rank}] &= \prod_{i=1}^{m} \Pr[r_i \notin \text{span} \{r_1,\ldots, r_{i-1}\}] \\
            &\geq \prod_{i=1}^{m} \left( 1 - 2^{i-1}(2^{-n} + \varepsilon)\right) \\
            &\geq 1 - (2^{-n} + \varepsilon) \cdot \sum_{i=1}^m 2^{i-1} \\
            & = 1 - 2^{m-n} - \varepsilon \cdot 2^m 
    \end{align*}
    where the first inequality holds since $r_i=(A_{r,1},A_{r,2}, \ldots, A_{r,n})$ is an $\varepsilon$-almost $n$-wise random variable and the second inequality is a standard union bound.
\end{proof}

\begin{proof}[Proof of \Cref{clm:encode-meta-ind}]
    Denote $x = |[j+1, N]\cap \cF|$ (the number of frozen bits in $\bfv_4$) and note that $j = N - x - \frac{N}{\log N}$. Therefore, we have $|[j] \cap \cF| = \rho N - x$. Denote $I_{i} := [i\cdot \delta N, (i+1) \cdot \delta N - 1]$ for all $i\in [\floor{j/\delta N} - 1]$. 
    Assume that for all $i\in [\floor{j/\delta N} - 1]$, it holds that $|I_{i} \cap F| > (\rho + 2\delta) \delta N$. 
    Then, $|[\delta N \cdot \floor{j/\delta N }] \cap F| > (\rho + 2\delta ) j - \delta N$.
    Therefore,
    \begin{align*}
        |\cF| &>  \left(\rho + 2\delta \right) j - \delta N + x \\
            &= (\rho + 2\delta) N + (1 - \rho - 2\delta) x - \frac{(\rho + 2\delta)N}{\log N} - \delta N\\
            &= (\rho + \delta) N + (1 - \rho - 2\delta)x -\cO{\frac{N}{\log N}}\\
            & > \rho N
    \end{align*}
    where the last inequality follows since $2\delta < 1-\rho$ and for large enough $N$.
\end{proof}

\begin{proof}[Proof of \Cref{clm:mod-code}]
    Assume that $\wt{\bfv}\equiv y (\bmod \, d)$ for some $y < d$. Since there are at least $2d$ unfrozen bits, at least $d$ of them are either $1$ or $0$. Assume w.l.o.g., that at least $d$ of them are zero. Then, we need to flip exactly $x - y (\bmod \, d)$ bits in order to get a vector $\bfw$ with $\wt{\bfw} \equiv x (\bmod \, d)$.
\end{proof}
%Complexity of the computing the function $g$ from \Cref{lem:k-wise}. 
% The $i$th bit output of the function $g$ on $(x,y)$ is the inner product $\langle x^i, y\rangle$.
% Since a multiplication of two elements in $\mathbb{F}_{2^{t/2}}$ takes $\cO{t^2}$ and the inner product can be done in $\cO{t}$ we get that the overall complexity of computing $g$ is $\cO{r\cdot t^2}$.
% \ronicomment{We should thank Joao for this...}
\end{document}